\newcommand{\beq}{\begin{equation}}
\newcommand{\eeq}{\end{equation}}
\DeclareMathOperator*{\essinf}{ess\,inf}
\newtheorem{theorem}{Theorem}[section]
\newtheorem{corollary}[theorem]{Corollary}
\newtheorem{proposition}[theorem]{Proposition}
\newtheorem{remark}[theorem]{Remark}
\newtheorem{Assumptions}[theorem]{Assumption}
\newcommand{\RR}{ \mathbb{R}}
\newcommand{\EE}{ \mathbb{E}}
\begin{document} 
\title{\textbf{Irreversible Investment under L\'{e}vy Uncertainty:\\an Equation for the Optimal Boundary}\footnote{Financial support by the German Research Foundation (DFG) via grant Ri 1128-4-1 is gratefully acknowledged by the first author. Research of the second author has been supported in part by a grant from Svenska kulturfonden via Stiftelsernas professorspool, Finland.}}
\author{Giorgio Ferrari\thanks{Center for Mathematical Economics, Bielefeld University, Germany; \texttt{giorgio.ferrari@uni-bielefeld.de}}
\and Paavo Salminen \thanks{{\AA}bo Akademi University, Department of Natural Sciences/Mathematics and Statistics, F\"{a}nriksgatan 3 B, FIN-20500 {\AA}bo, Finland; \texttt{phsalmin@abo.fi}}}
\date{\today}
\maketitle

\vspace{0.5cm}

{\textbf{Abstract.}} We derive a new equation for the optimal investment boundary of a general irreversible investment problem under exponential L\'{e}vy uncertainty.
The problem is set as an infinite time-horizon, two-dimensional degenerate singular stochastic control problem.
In line with the results recently obtained in a diffusive setting, we
show that the optimal boundary is intimately linked to the unique
optional solution of an appropriate Bank-El Karoui representation
problem. Such a relation and the Wiener-Hopf factorization allow us to
derive an integral equation for the optimal investment
boundary. In case the underlying  L\'{e}vy process hits any
  point in $ \RR$ with positive probability we show that the integral
  equation for the investment boundary is uniquely satisfied by the
  unique solution of another equation which is easier to handle. As a
  remarkable by-product we prove the continuity of the optimal investment boundary.
The paper is concluded with explicit results for profit functions of
(i) Cobb-Douglas type and (ii) CES type. In the first case the
function is separable  and in the second case non-separable.
\smallskip

{\textbf{Key words}}:
free-boundary, irreversible investment, singular stochastic control, optimal stopping, L\'{e}vy process, Bank and El Karoui's representation theorem, base capacity.

\smallskip

{\textbf{MSC2010 subsject classification}}: 91B70, 93E20, 60G40, 60G51.

\smallskip

{\textbf{JEL classification}}: C02, E22, D92, G31.

\section{Introduction}
\label{introduction}

Investment problems under uncertainty have received increasing attention in the last years in both the economic and the mathematical literature (see, for instance, \cite{DixitPindyck} for an extensive review).
Several economic papers tackle the problem of a firm maximizing
profits when the operating profit function depends on an exogenous
stochastic shock process reflecting the changes in, e.g.,
technologically feasible output, demand, and macroeconomic conditions
and so on (see, e.g., \cite{AbelEberly}, \cite{BentolilaBertola}, \cite{Bertola}, and \cite{Pindyck}), and relate irreversible investment decisions and their timing to real options (cf.\ \cite{McDonaldSiegel} and \cite{Pindyck}, among others). Usually in those models profit functions are of separable type (as Cobb-Douglas) and the economic shock process is a geometric Brownian motion.

In the mathematical-economic literature, problems of continuous-time irreversible investment under uncertainty are usually modeled as concave (or convex) stochastic control problems with monotone controls (see, e.g., \cite{Chiarolla2}, \cite{Ferrari2014}, \cite{AOksendal}, \cite{Pham} and \cite{RiedelSu}).
In fact, due to the economic constraint that does not allow
disinvestment, an irreversible investment problem under uncertainty
may be seen as a so-called `monotone follower' problem; that is, a
problem in which the investment strategies are given by nondecreasing
stochastic processes, whose associated random Borel measures on $\mathbb{R}_+$ may be singular with respect to the Lebesgue measure.
In this setting the connection between irreversible investment under uncertainty and real options found in the economic literature (cf., e.g., \cite{McDonaldSiegel} and \cite{Pindyck}) may be seen as the well known connection between concave (or convex) stochastic control problems with monotone controls and certain problems in optimal stopping. This kind of connections have been firstly rigorously shown in \cite{KaratzasElKarouiSkorohod}, \cite{Karatzas81} and \cite{KaratzasShreve84}.

When the stochastic process $X$ underlying the optimization is Markovian, e.g., a diffusion or a L\'{e}vy process, the optimal control policy usually consists in splitting the state space of the singular stochastic control problem into two regions by a curve, called the optimal investment boundary or the free-boundary. These regions are generally called `action' and `inaction' regions as it is optimal therein to turn the control on or off, respectively. Similarly in optimal stopping problems one has `continuation' and `stopping' regions where it is optimal to let the evolution of $X$ continue and cease, respectively. The main feature of concave (or convex) singular stochastic control problems is that the action region of the singular stochastic control problem coincides with the stopping region of a suitably associated optimal stopping problem and the optimal policy is to keep the controlled process just inside the inaction (continuation) region, with minimal control. 
It is then evident that a study of the optimal stopping problem
associated with a singular stochastic control problem through a
characterization of its free-boundary separating the stopping and
continuation regions leads to a complete understanding of the optimal
control, i.e., the optimal investment policy of the  firm.

In this paper we essentially consider the same problem as in
\cite{Ferrari2014} but now the economic shock is modeled by an
exponential L\'{e}vy process rather than a regular linear
diffusion. L\'{e}vy processes may exhibit heavy tails and
skewness in the probability distributions commonly found in
time-series from the market data. We solve the irreversible
investment problem by a stochastic first order conditions approach in
the spirit of \cite{BankRiedel1}, \cite{Bank}, \cite{RiedelSu}, among
others, and by relying on a suitable application of the Bank-El Karoui
representation theorem (cf.\ \cite{BankElKaroui}, Theorem 3). As in
\cite{Ferrari2014} we prove that the unique optional solution of the
Bank-El Karoui representation problem is closely linked to the
free-boundary of the one-dimensional, infinite time-horizon,
parameter-dependent optimal stopping problem naturally associated with
the original singular control problem. Such a relation and the
Wiener-Hopf factorization for L\'{e}vy processes enable us to derive
an integral equation for the free-boundary. If the underlying
  L\'{e}vy process hits any point in $\RR$ with positive probability (as
  $\alpha$-stable L\'{e}vy processes with $\alpha \in (1,2)$ or
  jump-diffusion processes which play an important role in Financial
  Economics) the free-boundary is then proved to be a unique solution
of another --  more tractable and handy -- equation. Using the equation we, moreover, prove that the free-boundary is continuous in our general L\'{e}vy process framework. To the best of our knowledge this result appears here for the first time. Finally, we find the explicit form of the optimal boundary even in the non-separable case of a CES (constant elasticity of substitution) operating profit function (see Section \ref{CESsubsection} below), thus leading to a complete characterization of the optimal investment policy of a quite intricate stochastic irreversible investment problem.

The issue of determining the optimal investment boundary of investment
problems under L\'{e}vy uncertainty has been recently tackled also
  in \cite{Boy2004} and \cite{Liangetal}. However the setting therein
is simpler than ours since in \cite{Boy2004} only separable running profits are
considered  and  in \cite{Liangetal} only a one-dimensional model is addressed. Here instead we allow any concave running profit satisfying Assumption \ref{AssProfit} below and our irreversible investment problem is set up as a two-dimensional degenerate singular stochastic control problem.
Moreover, our equation follows from the strong Markov property and the
Wiener-Hopf factorization and is not obtained by writing down any
integro-differential free-boundary problem (as is done in
\cite{Liangetal}) nor by imposing any regularity condition of the
value function 
of the associated optimal stopping problem at the
boundary itself. In this sense our approach seems to bypass the
difficulties related to the validity of the smooth-fit condition
in a
L\'{e}vy setting 
(see, e.g., \cite{KyprianouAlili}, \cite{Boy} and \cite{PeskShir2000}).

The paper is organized as follows. In Section \ref{problem} we set up
the irreversible investment problem, which is then solved in Section
\ref{optimalsolutionFB}. In Section \ref{baseandfree} we obtain a
characterization of the optimal investment boundary in terms of a base
capacity process. Section \ref{sectionintegralequation} is devoted
to the equations characterizing  the optimal investment
boundary. Finally, some examples allowing explicit calculations are
presented  in Section \ref{Examples}.


\section{The Optimal Investment Problem}

\subsection{Setting and Basic Assumptions}
\label{problem}

As in \cite{Ferrari2014} consider the optimal irreversible investment
problem of a firm producing a single good. However, to take into
account the fact that empirically the market often exhibits
significant skewness and kurtosis we model the uncertain status of the
economy (e.g., the demand of the good, or the  macroeconomic
conditions, or the price of the produced good) by the exponential
random process $e^X$, where $X=\{X_t, t \geq 0\}$ is a real valued L\'{e}vy process (other than a compound Poisson process or a subordinator)
defined on a complete filtered probability space $(\Omega,
\mathcal{F}, \{\mathcal{F}_t\}_{t\geq 0},\mathbb{P})$. For any $x\in
\mathbb{R}$, we let 
$\mathbb{P}_x(\cdot):=\mathbb{P}(\cdot | X_0=x)$ and $\mathbb{E}_x$ the corresponding expectation operator. In the following, we will simply write $\mathbb{P}_0=\mathbb{P}$ and $\mathbb{E}_0=\mathbb{E}$.

A L\'{e}vy process is a stochastic process with stationary and
independent increments, having a.s.\ c\`{a}dl\`{a}g paths
(right-continuous with left limits), and starting from zero at time
zero. Each L\'{e}vy process is fully characterized by its L\'{e}vy
triplet $(\gamma, \sigma, \Pi)$, where $\gamma, \sigma \in
\mathbb{R}$ 
and $\Pi$ is the so called L\'{e}vy measure which is concentrated on $\mathbb{R}\setminus\{0\}$ and satisfies
$$\int_{\mathbb{R}}(1 \wedge x^2)\Pi(dx) < \infty.$$

\noindent Moreover, each L\'{e}vy process $X$ can be represented as
\beq
\label{decomposition}
X_t = \gamma t + \sigma B_t + X^{(1)}_t + X^{(2)}_t,
\eeq
where $B$ is a standard Brownian motion, $X^{(1)}$ is a  zero mean pure jump martingale, $X^{(2)}$ is a compound Poisson process with jumps at least of size one, and all the components in \eqref{decomposition} are independent.
As a consequence of stationary and independent increments, it can also be shown that  
\beq
\label{Levyexponent}
\mathbb{E}[e^{i\theta X_t}] = e^{-t\Psi(\theta)},
\eeq
for all $t\geq 0$ and $\theta \in \mathbb{R}$, where 
$$\Psi(\theta):=-\log\mathbb{E}[e^{i\theta X_1}]=i \gamma \theta + \frac{1}{2}\sigma^2\theta^2 + \int_{\mathbb{R}}\Big(1 - e^{i\theta x} + i\theta x\mathds{1}_{\{|x|<1\}}\Big)\Pi(dx)$$ 
is the L\'{e}vy characteristic exponent of $X$.
Well-known L\'{e}vy processes are Brownian motion, Poisson process, jump-diffusion processes and the variance-gamma process. We refer to \cite{Bertoin} or \cite{Kyprianou} for a detailed exposition on L\'{e}vy processes.

The firm's production capacity is assumed to evolve according to
\beq
\label{productioncapacity}
C^{y,\nu}_t:= y + \nu_t, \qquad C^{y,\nu}_0:= y \geq 0,
\eeq
where $\nu$ is an (irreversible) investment plan, i.e., a nondecreasing, left-continuous, $(\mathcal{F}_t)$-adapted process such that $\nu_0=0$ $\mathbb{P}$-a.s.
 
The instantaneous profits of the firm are described by the operating
profit function $\pi: \mathbb{R}_+ \times \mathbb{R}_{+} \mapsto
\mathbb{R}_{+}$, depending on the the status of economy and on the
production capacity. The following assumption is taken to be valid
throughout the paper:
\begin{Assumptions}
\label{AssProfit}
\hspace{10cm}
\begin{enumerate}
\item The mapping $(z,c)\mapsto \pi(z,c)$ is continuous. Moreover, $c \mapsto \pi(z,c)$ is strictly increasing and strictly concave with continuous and strictly decreasing derivative $\pi_{c}(z,c):=\frac{\partial}{\partial c}\pi(z,c)$ on $\mathbb{R}_+ \times (0,\infty)$ satisfying $$\lim_{c \rightarrow 0}\pi_{c}(z,c)= \infty,\,\,\,\,\,\,\,\,\,\,\,\,\,\,\,\lim_{c \rightarrow \infty}\pi_{c}(z,c)= \kappa,$$
for some $0 \leq \kappa < \infty$.
\item The process $(\omega,t) \mapsto \pi_c\big(e^{x + X_t(\omega)}, y\big)$ is $\mathbb{P}(d\omega) \otimes e^{-rt}dt$ integrable for any $y > 0$.
\end{enumerate}
\end{Assumptions}

\noindent 
Here $r$ is a positive discount factor satisfying 
\begin{Assumptions}
\label{rbiggerthankappa}
$r > \kappa$.
\end{Assumptions}
\noindent Assumption \ref{rbiggerthankappa} will be needed in the next section to derive the optimal control policy (see Proposition \ref{existenceback} below). Moreover, we will see in Remark \ref{remnecessary} that Assumption \ref{rbiggerthankappa} is necessary to have a nonempty `no-investment region'.

\begin{remark}
Assumption \ref{AssProfit}.1 is satisfied with $\kappa = 0$ by the Cobb-Douglas and the logarithmic operating profit functions.
On the other hand, in the case of a CES (constant elasticity of substitution) profit function of the form $\pi(z,c)=(\alpha z^{\gamma} + (1-\alpha)c^{\gamma})^{\frac{1}{\gamma}}$, for some $\alpha \in (0,1)$ and $\gamma \in (0,1)$ (which reduces to the Cobb-Douglas operating profit, $\pi(z,c)=z^{\alpha}c^{1-\alpha}$, when $\gamma=0$) one has $\kappa = (1-\alpha)^{\frac{1}{\gamma}}$. It is also worth noticing that a CES operating profit with $\gamma < 0$ does not fulfill Assumption \ref{AssProfit}.1 because in this case $\lim_{\,c \rightarrow 0}\pi_{c}(z,c)= (1-\alpha)^{1/\gamma}$.
\end{remark}

For any investment plan $\nu$ the expected present value of the future overall net profits is defined as
\beq
\label{netprofit}
\mathcal{J}_{x,y}(\nu):=\mathbb{E}\bigg\{\int_0^{\infty} e^{-r t}\,\pi(e^{x + X_t}, C^{y,\nu}_t)\,dt - \int_0^{\infty} e^{- r t} d\nu_t \bigg\}.
\eeq
From now on we will call investment plans admissible if their present value is finite; i.e.\ if
\beq
\label{presentvalue}
\mathbb{E}\bigg\{\int_0^{\infty} e^{- r t} d\nu_t\bigg\} < \infty.
\eeq
We will denote by $\mathcal{S}_o$ the set of all admissible investment plans.
Due to \eqref{presentvalue} and the positivity of $\pi$ it holds that $\mathcal{J}_{x,y}(\nu)> -\infty$ for any $\nu \in \mathcal{S}_o$.
The firm's manager aims at picking an admissible $\nu^*$ 
such that
\beq
\label{valuefunction}
V(x,y):=\mathcal{J}_{x,y}(\nu^*)=\sup_{\nu \in \mathcal{S}_o}\mathcal{J}_{x,y}(\nu)<\infty, \qquad (x,y) \in \mathbb{R}\times \mathbb{R}_+.
\eeq

\noindent 
Since $\pi(z,\cdot)$ is strictly concave, $\mathcal{S}_o$ is convex
and $C^{y,\nu}$ is affine in $\nu$, we have that $\mathcal{J}_{x,y}(\cdot)$ is strictly concave on $\mathcal{S}_o$ as well. Consequently, if an optimal solution $\nu^{*}$ to (\ref{valuefunction}) does exist, it is unique. 
We provide the form of the optimal control in the next section.


\subsection{The Optimal Investment Strategy}
\label{optimalsolutionFB}

In this section we will solve the optimal investment problem \eqref{valuefunction}. A very general stochastic irreversible investment problem similar to ours \eqref{valuefunction} has been thoroughly studied in \cite{RiedelSu}, where the shock process is assumed to be a general progressively measurable process, or more recently in \cite{Ferrari2014}, in a diffusive setting.
It thus follows that some of the following results may be obtained by easily adapting arguments in \cite{Ferrari2014} or \cite{RiedelSu} (see also \cite{Bank} and \cite{Steg}). We will state them for the sake of completeness and to have a self-contained paper, but we will only sketch their proofs referring to the literature for details.


We denote by $\mathcal{T}$ the set of all $\mathcal{F}_t$-stopping
times $\tau \in [0,\infty]$ and put $e^{-r\tau(\omega)}=0$ if $\tau(\omega) = \infty$.
Following \cite{Ferrari2014}, equation (11) and Theorem 3.2, we have the following characterization of the optimal control $\nu^*$.

\begin{proposition}
\label{FOCsthm}
Under Assumption \ref{AssProfit}, a control $\nu^{*} \in \mathcal{S}_o$ such that $\mathcal{J}_{x,y}(\nu^*) < \infty$ is the unique optimal investment strategy for problem (\ref{valuefunction}) if and only if the following first order conditions for optimality
\beq
\label{FOCs}
\left\{
\begin{array}{ll}
\displaystyle \mathbb{E}\bigg\{\int_{\tau}^{\infty} e^{- r s}
\pi_c(e^{x + X_s}, C^{y,\nu^*}_s)\,ds
\Big|\mathcal{F}_{\tau}\bigg\} - e^{-r \tau} \leq 0,\quad
             {\text a.s.}\,\,\forall \tau \in \mathcal{T}, \\ \\
\displaystyle \mathbb{E}\bigg\{\int_0^{\infty}\bigg[\mathbb{E}\bigg\{\int_{t}^{\infty} e^{- r s} \pi_c(e^{x + X_s}, C^{y,\nu^*}_s)\,ds \Big|\mathcal{F}_{t}\bigg\} - e^{-r t}\bigg] d\nu^{*}_t\bigg\} = 0,
\end{array}
\right.
\eeq
hold true.
\end{proposition}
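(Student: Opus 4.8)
The plan is to exploit the strict concavity of $\mathcal{J}_{x,y}(\cdot)$ on the convex set $\mathcal{S}_o$ (already recorded in the excerpt) and to characterize the maximizer through a first-order variational inequality for the Gâteaux derivative. Concretely, for $\nu,\nu^{*}\in\mathcal{S}_o$ and $\varepsilon\in(0,1]$ set $\nu^{\varepsilon}:=\nu^{*}+\varepsilon(\nu-\nu^{*})\in\mathcal{S}_o$. By concavity the map $\varepsilon\mapsto\mathcal{J}_{x,y}(\nu^{\varepsilon})$ is concave on $[0,1]$, so $\nu^{*}$ is the (necessarily unique) optimizer if and only if its right derivative at $\varepsilon=0^{+}$ is non-positive for every admissible $\nu$; this is the standard optimality condition for concave maximization over a convex domain, and I would record this equivalence first.

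Next I would compute this right derivative explicitly. The cost term in \eqref{netprofit} is linear in $\nu$ and hence contributes $-\mathbb{E}\{\int_0^{\infty}e^{-rt}\,d(\nu_t-\nu^{*}_t)\}$. For the profit term, concavity of $c\mapsto\pi(z,c)$ renders the difference quotients monotone in $\varepsilon$, so by monotone convergence (applied to the positive and negative parts of $\nu_t-\nu^{*}_t$, and using the integrability of Assumption \ref{AssProfit}.2 together with the fact that capacities stay bounded below to control the limit) the right derivative of the profit term equals $\mathbb{E}\{\int_0^{\infty}e^{-rt}\pi_c(e^{x+X_t},C^{y,\nu^{*}}_t)(\nu_t-\nu^{*}_t)\,dt\}$. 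Writing $\nu_t-\nu^{*}_t=\int_{[0,t)}d(\nu_s-\nu^{*}_s)$ and applying Fubini converts the $dt$-integral into an integral against $d(\nu_s-\nu^{*}_s)$ with inner kernel $\int_s^{\infty}e^{-rt}\pi_c(e^{x+X_t},C^{y,\nu^{*}}_t)\,dt$. Since $\nu-\nu^{*}$ is adapted and left-continuous, hence predictable, the optional projection theorem (equivalently, conditioning under the $d\nu$-integral) lets me replace this kernel by its conditional expectation given $\mathcal{F}_s$ without altering the expected integral. Setting
\[
\ell^{\nu^{*}}_s:=\mathbb{E}\Big\{\int_s^{\infty}e^{-rt}\pi_c(e^{x+X_t},C^{y,\nu^{*}}_t)\,dt\,\Big|\,\mathcal{F}_s\Big\}-e^{-rs},
\]
the optimality condition reads: $\mathbb{E}\{\int_0^{\infty}\ell^{\nu^{*}}_s\,d(\nu_s-\nu^{*}_s)\}\le 0$ for every $\nu\in\mathcal{S}_o$.

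I would then show this variational inequality is equivalent to the pair \eqref{FOCs}. For sufficiency, if the first condition holds (i.e.\ $\ell^{\nu^{*}}_\tau\le 0$ a.s.\ for all $\tau\in\mathcal{T}$, which forces the optional process $\ell^{\nu^{*}}$ to be non-positive) and the second holds (i.e.\ $\mathbb{E}\{\int_0^{\infty}\ell^{\nu^{*}}_s\,d\nu^{*}_s\}=0$), then for any admissible $\nu$ the term $\mathbb{E}\{\int_0^{\infty}\ell^{\nu^{*}}_s\,d\nu_s\}\le 0$ because $d\nu\ge 0$ and $\ell^{\nu^{*}}\le 0$, while the $d\nu^{*}$-term vanishes; hence the derivative is non-positive and $\nu^{*}$ is optimal. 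For necessity, I would test the variational inequality against suitable admissible perturbations: the choices $\nu=0$ and $\nu=2\nu^{*}$ (both in $\mathcal{S}_o$) yield $\mathbb{E}\{\int_0^{\infty}\ell^{\nu^{*}}_s\,d\nu^{*}_s\}\ge 0$ and $\le 0$ respectively, giving the second condition; and, for fixed $\tau\in\mathcal{T}$ and $A\in\mathcal{F}_\tau$, the perturbation $\nu=\nu^{*}+a\,\mathbf{1}_A\mathbf{1}_{(\tau,\infty)}$ with $a>0$ gives $\mathbb{E}\{\ell^{\nu^{*}}_\tau\mathbf{1}_A\}\le 0$ for all such $A$, whence $\ell^{\nu^{*}}_\tau\le 0$ a.s.\ by $\mathcal{F}_\tau$-measurability, which is the first condition.

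The main obstacle I anticipate is the rigorous justification of the two interchanges in the second step: the passage to the limit in the difference quotients for the profit term, where the integrand must be monotonically controlled or dominated uniformly in $\varepsilon$ (relying crucially on Assumption \ref{AssProfit}.2 and on capacities staying bounded away from $0$), and the Fubini together with the optional-projection argument producing the conditional expectation inside the $d\nu$-integral. Care is also needed to confirm that the perturbations used for necessity remain admissible and respect the left-continuity convention for investment plans. This is precisely where I would lean on the corresponding arguments of \cite{Ferrari2014} and \cite{RiedelSu}.
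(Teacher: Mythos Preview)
The paper gives no proof of its own for this proposition; it simply records the result and refers to \cite{Ferrari2014}, equation~(11) and Theorem~3.2 (and to \cite{RiedelSu}), whose argument is precisely the concave-maximization/G\^ateaux-derivative computation you sketch. Your proposal is therefore correct and in line with the approach the paper relies on---indeed it is more detailed than what the paper itself provides, and the technical points you flag (monotone passage to the limit in the difference quotients via Assumption~\ref{AssProfit}, Fubini plus optional projection to produce the conditional expectation, admissibility of the test perturbations) are exactly the ones handled in those references.
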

First order conditions \eqref{FOCs} may be seen as a stochastic,
infinite-dimensional generalization of the Kuhn-Tucker conditions from
the classical optimization theory. The left-hand side of the
inequality in the first condition \eqref{FOCs} is called the
supergradient process (cf.\ \cite{Ferrari2014}, equation (11) and
Remark 3.1). It is interpreted as the expected present value of the future overall net marginal profits resulting from an extra unit of investment at time $\tau \in \mathcal{T}$.
The intuition behind \eqref{FOCs} is that when the supergradient is positive at some stopping time, a small extra investment is profitable. On the other hand, investment should not occur when the supergradient is negative, since similarly reducing such an investment would be beneficial. As in \cite{Ferrari2014}, Section 3, or \cite{RiedelSu}, Theorem 3.2, the next proposition links the optimal control $\nu^*$ to the solution of a suitable Bank-El Karoui's representation problem, see \cite{BankElKaroui}, Theorem 1, Theorem 3 and Remark 2.1, related to (\ref{FOCs}).

\begin{proposition}
\label{existenceback}
Let Assumptions \ref{AssProfit} and \ref{rbiggerthankappa} hold. Then the equation
\beq
\label{backward}
\mathbb{E}\bigg\{\int_{\tau}^{\infty} e^{- r s} \pi_c\Big(e^{x + X_s}, \sup_{\tau \leq u < s}l_u\Big)\,ds \Big|\mathcal{F}_{\tau}\bigg\} = e^{-r \tau}, \qquad \tau \in \mathcal{T},
\eeq
has a unique (up to indistinguishability) strictly positive optional
solution with upper right-continuous paths\footnote{According to
  \cite{BankElKaroui}, Lemma $4.1$ (see also \cite{BankFollmer},
  Remark $1.4$-(ii)), we call a real valued process $\xi$ upper
  right-continuous on $[0,T)$ if, for each $t$, $\xi_t = \limsup_{s
      \searrow t} \xi_s$ with $\limsup_{s \searrow t} \xi_s :=
    \lim_{\epsilon \downarrow 0} \sup_{s \in [t, (t+\epsilon) \wedge
        T]} \xi_s.$}. Let $l^{*}$ denote this solution and define
\beq
\label{optimalsol}
\nu^{*}_t:=(\sup_{0 \leq s < t} l^{*}_s - y ) \vee 0,\quad t>0, \qquad \nu^{*}_0:=0.
\eeq 
If $\nu^{*}$ is admissible and such that $\mathcal{J}_{x,y}(\nu^*) < \infty$
then it is the unique optimal irreversible investment plan for problem
\eqref{valuefunction}.
\end{proposition}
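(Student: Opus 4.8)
The plan is to establish the two assertions in turn, the existence and uniqueness of $l^*$ via the Bank-El Karoui representation theorem and then the optimality of the induced plan $\nu^*$ via Proposition \ref{FOCsthm}. For the first part I would recast \eqref{backward} as a representation problem for the deterministic optional process $X_\tau := e^{-r\tau}$, which is bounded by one (hence of class (D)) and satisfies $X_\infty = 0$, taking as running-level function
\[
f(\omega, s, \ell) := e^{-rs}\,\pi_c\big(e^{x+X_s(\omega)}, \ell\big), \qquad s \geq 0,\ \ell > 0.
\]
By Assumption \ref{AssProfit}.1 the map $\ell \mapsto f(\omega,s,\ell)$ is continuous and strictly decreasing with $\lim_{\ell\downarrow 0} f = +\infty$ and $\lim_{\ell\uparrow\infty} f = e^{-rs}\kappa$, while Assumption \ref{AssProfit}.2 supplies the integrability of $f(\cdot,\ell)$ against $\mathbb{P}(d\omega)\otimes e^{-rt}dt$ for each fixed level. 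The only structural deviation from the standard setting is that $f$ decreases to the level $e^{-rs}\kappa$ rather than to $-\infty$; I would argue this is harmless precisely because Assumption \ref{rbiggerthankappa} yields
\[
\lim_{\ell \uparrow \infty}\,\mathbb{E}\Big[\int_\tau^\infty e^{-rs}\,\pi_c(e^{x+X_s},\ell)\,ds\,\Big|\,\mathcal{F}_\tau\Big] \;=\; \frac{\kappa}{r}\,e^{-r\tau} \;<\; e^{-r\tau},
\]
so that the target $e^{-r\tau}$ always lies strictly inside the range of attainable values, whereas the Inada condition $\lim_{\ell\downarrow 0}\pi_c=\infty$ simultaneously forces the solution to stay strictly positive. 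Invoking \cite{BankElKaroui}, Theorem 1 and Theorem 3 (with Remark 2.1), then produces a unique (up to indistinguishability) strictly positive optional process $l^*$ with upper right-continuous paths solving \eqref{backward}.

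For the optimality of $\nu^*$ the strategy is to verify the two conditions in \eqref{FOCs} and conclude by Proposition \ref{FOCsthm}. Writing $A_t := \sup_{0\le s<t}l^*_s$, definition \eqref{optimalsol} gives $C^{y,\nu^*}_t = y\vee A_t \geq \sup_{\tau\le u<t}l^*_u$ for every $t>\tau$; since $\pi_c(z,\cdot)$ is decreasing, this together with \eqref{backward} immediately yields
\[
\mathbb{E}\Big[\int_\tau^\infty e^{-rs}\,\pi_c(e^{x+X_s}, C^{y,\nu^*}_s)\,ds\,\Big|\,\mathcal{F}_\tau\Big] \;\leq\; \mathbb{E}\Big[\int_\tau^\infty e^{-rs}\,\pi_c\big(e^{x+X_s}, \textstyle\sup_{\tau\le u<s}l^*_u\big)\,ds\,\Big|\,\mathcal{F}_\tau\Big] \;=\; e^{-r\tau},
\]
which is the first condition in \eqref{FOCs}. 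For the complementarity condition I would exploit that the random measure $d\nu^*$ is carried by the set of times $t$ at which $A$ strictly increases; at such a time the current level attains the running maximum, so $\sup_{t\le u<s}l^*_u \geq A_t \geq y$ and hence $C^{y,\nu^*}_s = \sup_{t\le u<s}l^*_u$ for all $s\geq t$ on the support of $d\nu^*$. Substituting this identity into \eqref{backward} with $\tau=t$ shows that the supergradient in square brackets vanishes $d\nu^*$-almost everywhere, which gives the second condition in \eqref{FOCs} and thus, by Proposition \ref{FOCsthm}, the optimality and uniqueness of $\nu^*$.

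I expect the main obstacle to be twofold. On the representation side, the delicate point is the rigorous verification that the non-standard limiting behaviour of $f$ — decreasing to the strictly positive level $e^{-rs}\kappa$ rather than to $-\infty$ — still falls within the scope of the Bank-El Karoui theorem; this is exactly where Assumption \ref{rbiggerthankappa} is indispensable, since without it no finite base level, and hence no nonempty no-investment region, would exist. On the optimality side, the ``flat-off'' argument for the complementarity condition requires care because in the present Lévy framework $l^*$ and its running supremum $A$ may have jumps, so making the identity $C^{y,\nu^*}_s = \sup_{t\le u<s}l^*_u$ precise on the support of $d\nu^*$, and reconciling the left-continuous convention for $\nu^*$ with the upper right-continuous convention for $l^*$, is the technically sensitive step. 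Admissibility of $\nu^*$ and the finiteness $\mathcal{J}_{x,y}(\nu^*)<\infty$ are assumed in the statement and so need not be proved here.
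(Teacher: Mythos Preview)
Your two-step architecture matches the paper's exactly, and your verification of the first-order conditions in Step~2 is essentially what the paper does (it simply cites \cite{RiedelSu}, Theorem 3.2, whereas you spell out the supergradient and flat-off arguments; your version is fine).

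The substantive difference is in Step~1. You try to apply Bank--El Karoui with a level variable $\ell\in(0,\infty)$ and a running function $f(\omega,s,\ell)=e^{-rs}\pi_c(e^{x+X_s},\ell)$ that decreases only to the positive level $e^{-rs}\kappa$, and you argue heuristically that this is ``harmless'' because the target $e^{-r\tau}$ stays in range. You yourself flag this as the delicate point, and indeed it is not a direct application of the theorem as stated in \cite{BankElKaroui}, which requires $f(\omega,t,\cdot)$ to map $\mathbb{R}$ onto $\mathbb{R}$ (strictly decreasing from $+\infty$ to $-\infty$). The paper resolves this not by a range argument but by a concrete reparametrization: it takes the measure $\mu(d t)=e^{-rt}dt$ and sets
\[
f(\omega,t,l)=
\begin{cases}
\pi_c\big(e^{x+X_t(\omega)},-1/l\big), & l<0,\\
-l+\kappa, & l\ge 0,
\end{cases}
\]
so that $f(\omega,t,\cdot):\mathbb{R}\to\mathbb{R}$ is continuous, strictly decreasing, with limits $+\infty$ and $-\infty$, and the theorem applies verbatim to produce an optional solution $\xi^*$. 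Assumption~\ref{rbiggerthankappa} then enters to show $\xi^*_t<0$ a.s.\ (this is where the $\kappa/r<1$ computation you wrote down is actually used), whence $l^*_t:=-1/\xi^*_t>0$ solves \eqref{backward}. So your intuition about why $r>\kappa$ matters is right, but the mechanism that makes the argument rigorous is the extension/reparametrization trick, which you are missing.
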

\begin{proof}
We only sketch roughly the two main steps of the proof and refer to \cite{Ferrari2014} and \cite{RiedelSu} for details.\medskip

\noindent\emph{Step 1.}\,\,
Here the objective is to  prove that \eqref{backward} admits a unique (up to indistinguishability) strictly positive optional solution $l^{*}$ with upper right-continuous paths. To this end, for $\kappa$ as in Assumption \ref{AssProfit}, apply the Bank-El Karoui Representation Theorem (cf.\ \cite{BankElKaroui}, Theorem 3 and Remark 2.1) with
\begin{equation}
\label{identification}
\hat{T}=+\infty,\,\,\,\,\,\,\,\,\,\,\,\,\,\,
\mu(\omega,dt):= e^{-rt}dt
\end{equation}
and
\begin{equation}
\label{identificationf}
f(\omega,t,l):=
\left\{
\begin{array}{ll}
\pi_c\left(e^{x + X_t(\omega)}, -\frac{1}{l}\right),\,\,\,\,\,\mbox{for}\,\,l<0,\\ \\
-l + \kappa\,,\,\,\,\,\,\,\,\,\,\,\,\,\,\,\,\,\,\,\,\,\,\,\,\,\,\,\,\,\,\,\,\,\mbox{for}\,\,l\geq 0,
\end{array}
\right.
\end{equation}
to represent the deterministic process $\{e^{-rt},\, t\geq 0\}$, and then use the same arguments as in the proof of Proposition 3.4 in \cite{Ferrari2014}.\medskip

\noindent\emph{Step 2.}\,\,
Proceeding as in the proof of Theorem $3.2$ in \cite{RiedelSu}, it is easy to see that $\nu^*$ of \eqref{optimalsol} satisfies the first order conditions \eqref{FOCs}. Hence by Proposition \ref{FOCsthm} $\nu^*$ is optimal if it is admissible and such that $\mathcal{J}_{x,y}(\nu^*) < \infty$.
\end{proof}

\begin{remark}
\label{admissibility}
Notice that $\nu^*$ defined in \eqref{optimalsol} is clearly increasing and left-continuous. Moreover, since $l^*$ is optional and hence progressively measurable, then $\nu^*$ is progressively measurable by \cite{DM}, Theorem IV.33, and hence $(\mathcal{F}_t)$-adapted.
Therefore to prove that $\nu^*$ is admissible it thus remains to show
that $\nu^*$ satisfies \eqref{presentvalue}. Such a condition, as well
as the fact that $\mathcal{J}_{x,y}(\nu^*)< \infty$, is usually true
if the discount factor $r$ is big enough. In many cases this can
  be verified once the explicit form of $\nu^*$ is known (see Section \ref{Examples} for examples).
\end{remark}

\subsection{The Base Capacity and the Free-Boundary}
\label{baseandfree}

It is easy to see that our optimal policy \eqref{optimalsol} coincides
with that in Theorem 3.2 of \cite{RiedelSu} when $\delta=0$
therein. Following Definition 3.1 in \cite{RiedelSu}, the process
$l^*$ is a \textsl{base capacity} process, i.e., an index
 describing the desirable level of capacity at time $t$.
At times $t$, when the firm's production capacity is strictly above
$l^*_t$, it is optimal to wait as at those times the firm faces excess
of capacity. On the other hand, when the capacity level is below
$l^*_t$, the firm should instantaneously invest to reach the level
$l^*_t$. It therefore represents the maximal capacity level for which
it is not profitable to delay investment to any future time. Clearly, $l^*$ must be linked to the optimal boundary of an
associated optimal timing problem. Such a connection has been recently
shown in \cite{Ferrari2014}, Theorem 3.9, in a diffusive setting (see
also \cite{CF} in the context of a one-dimensional irreversible
investment problem over a finite time-horizon). In this section it is
seen that a similar connection also holds in our L\'{e}vy setting.

Similarly as in \cite{Ferrari2014}, eq.\ (25), introduce the optimal stopping problem: find a stopping time $\tau^*$ such that for all $(x,y) \in \mathbb{R} \times (0,\infty)$
\beq
\label{v}
\hspace{-0.11cm} v(x,y):=\mathbb{E}\bigg\{\int_0^{\tau^*} e^{-rs}\pi_c(e^{x + X_s},y)\,ds + e^{-r \tau^*}\bigg\}=\inf_{\tau \geq 0}\mathbb{E}\bigg\{\int_0^{\tau} e^{-rs}\pi_c(e^{x + X_s},y)\,ds + e^{-r \tau}\bigg\}.
\eeq Notice that problem \eqref{v} is the optimal timing problem associated to the irreversible investment problem \eqref{valuefunction} since it may be interpreted as the minimal cost of not investing. Mathematically, problem \eqref{v} is the one-dimensional, infinite time-horizon, parameter-dependent (as $y$ enters only as a parameter) optimal stopping problem associated to the singular control problem \eqref{valuefunction}. In fact, it can be shown (see, e.g., \cite{KaratzasBaldursson}, \cite{KaratzasElKarouiSkorohod} and \cite{KaratzasShreve84}) that under our assumptions $V_y(x,y)=v(x,y)$ and that $\tau^*:=\inf\{t \geq 0: \nu^*_t > 0\}$, with $\nu^*$ optimal for \eqref{valuefunction}, is an optimal stopping time for \eqref{v}.

Since $v(x,y) \leq 1$, for all $(x,y) \in \mathbb{R} \times (0,\infty)$, the state space splits into 
\begin{equation*}
\label{continuationstopping}
\mathcal{S}:= \{(x,y) \in \mathbb{R} \times (0,\infty): v(x,y) = 1\}, \qquad \mathcal{C}:=\{(x,y) \in \mathbb{R} \times (0,\infty): v(x,y) < 1\}.
\end{equation*}
Intuitively $\mathcal{S}$ is the region in which it is optimal to
invest immediately (the so-called `action region' or `investment
region') as therein the marginal value $v=V_y$ equals the marginal
cost of the investment. On the other hand, $\mathcal{C}$ is the region in which it is profitable to delay the investment option (the so-called `inaction region' or `no-investment region'), as the marginal value $v=V_y$ is strictly less than the marginal cost of investment therein.
Since $\pi(z,\cdot)$ is strictly concave, the mapping $y \mapsto v(x,y)$ is decreasing for any $x \in \mathbb{R}$, and therefore
\beq
\label{boundary}
b(x):= \sup\{y > 0:\,v(x,y)=1\}, \qquad x \in \mathbb{R},
\eeq
is the boundary between the stopping and continuation regions,
i.e.\ the so called free-boundary. We adopt the convention $b \equiv 0$ if $\{y > 0:\,v(x,y)=1\}=\emptyset$.

\begin{remark}
\label{remnecessary}
Notice that Assumption \ref{rbiggerthankappa} is necessary to have nonempty no-investment region $\mathcal{C}$. Indeed, if $r \leq \kappa$ then 
\begin{equation*}
v(x,y) =  1 + \inf_{\tau \geq 0}\mathbb{E}\bigg\{\int_0^{\tau} e^{-rs} \Big(\pi_c(e^{x + X_s},y) - r\Big)\,ds\bigg\} \geq 1 + \inf_{\tau \geq 0}\mathbb{E}\bigg\{\int_0^{\tau} e^{-rs} (\kappa - r)\,ds\bigg\} = 1, \nonumber 
\end{equation*}
where the inequality above is due to the fact that $\pi_c(z,c) \geq
\lim_{c \rightarrow \infty}\pi_c(z,c) = \kappa$, $(z,c) 
$.
It thus follows that if $r \leq \kappa$ then $v(x,y) = 1$ for all $(x,y) \in \mathbb{R} \times (0,\infty)$, thus implying $\mathcal{C} = \emptyset$.
\end{remark}

As in \cite{Ferrari2014}, Assumption 3.6, or \cite{RiedelSu}, Section 5, we now assume that marginal profits are positively affected by improving market conditions.
\begin{Assumptions}
\label{xpicnondecr}
The mapping $z \mapsto \pi_c(z,c)$ is nondecreasing for any $c \in (0,\infty)$.
\end{Assumptions}

\begin{remark}
\label{remsupermodular}
Notice that, if $\pi$ were twice-continuously differentiable, then
Assumption \ref{xpicnondecr} would be equivalent to requiring $\pi$ to
be supermodular (see \cite{Topkis}); that is, for any $z_1, z_2 \in
\mathbb{R}_+$ and $c \in (0,\infty)$
$$
\pi(z_1 \vee z_2, c) + \pi(z_1 \wedge z_2,c) \geq \pi(z_1,c) +
\pi(z_2,c).
$$
The Cobb-Douglas and the CES profit functions are well known examples of supermodular profit functions on $(0,\infty) \times (0, \infty)$.
\end{remark}

\begin{proposition}
\label{vcontinua}
Let Assumptions \ref{AssProfit} and \ref{xpicnondecr} hold. Then, the value function $v$ of optimal stopping problem \eqref{v} is 
\begin{itemize}
	\item continuous on $\mathbb{R} \times (0, \infty)$,
	\item such that $x \mapsto v(x,y)$, $y \in (0,\infty)$, is nondecreasing.
\end{itemize}
\end{proposition}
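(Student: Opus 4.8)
The plan is to dispatch the monotonicity statement first, since it is immediate, and then to obtain joint continuity through a single estimate that bounds the oscillation of $v$ uniformly over all admissible stopping rules. Throughout I would write $J(x,y,\tau):=\mathbb{E}\{\int_0^{\tau} e^{-rs}\pi_c(e^{x+X_s},y)\,ds + e^{-r\tau}\}$, so that $v(x,y)=\inf_{\tau\in\mathcal{T}}J(x,y,\tau)$ and $0\le v\le 1$ (the upper bound coming from $\tau\equiv 0$).

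\emph{Monotonicity in $x$.} For $x\le x'$, Assumption \ref{xpicnondecr} together with $e^{x+X_s}\le e^{x'+X_s}$ gives $\pi_c(e^{x+X_s},y)\le\pi_c(e^{x'+X_s},y)$ pathwise, whence $J(x,y,\tau)\le J(x',y,\tau)$ for every $\tau\in\mathcal{T}$. Since taking the infimum over $\tau$ preserves a pointwise inequality, $v(x,y)\le v(x',y)$, which is the claim. (The decrease of $y\mapsto v(x,y)$ follows in the same way from the strict decrease of $c\mapsto\pi_c$, as already noted before \eqref{boundary}.)

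\emph{Continuity.} The key point I would exploit is that the terminal term $e^{-r\tau}$ does not depend on $(x,y)$ and hence cancels in $J(x',y',\tau)-J(x,y,\tau)$, leaving only the running part, whose integration limit may be enlarged to $+\infty$. Fixing $(x,y)$, $(x',y')$ and $\epsilon>0$ and picking $\tau_\epsilon$ that is $\epsilon$-optimal for $(x,y)$, using it as a candidate for $(x',y')$ yields $v(x',y')-v(x,y)\le J(x',y',\tau_\epsilon)-J(x,y,\tau_\epsilon)+\epsilon\le\Delta+\epsilon$, where $\Delta:=\mathbb{E}\int_0^{\infty}e^{-rs}|\pi_c(e^{x'+X_s},y')-\pi_c(e^{x+X_s},y)|\,ds$. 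Exchanging the roles of the two points and sending $\epsilon\downarrow 0$ gives the two-sided, stopping-rule-free bound $|v(x',y')-v(x,y)|\le\Delta$.

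It then remains to prove $\Delta\to 0$ as $(x',y')\to(x,y)$, which I would do by dominated convergence: the integrand tends to $0$ pointwise by the joint continuity of $(z,c)\mapsto\pi_c$ in Assumption \ref{AssProfit}.1, and on a neighbourhood $[x-1,x+1]\times[y-\delta,y+\delta]$ with $0<\delta<y$ the monotonicity of $\pi_c$ in $z$ (Assumption \ref{xpicnondecr}) and its decrease in $c$ (Assumption \ref{AssProfit}.1) bound both $\pi_c(e^{x'+X_s},y')$ and $\pi_c(e^{x+X_s},y)$ by the single envelope $\pi_c(e^{(x+1)+X_s},y-\delta)$, which is $\mathbb{P}(d\omega)\otimes e^{-rs}ds$-integrable precisely by Assumption \ref{AssProfit}.2. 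The crux — and the only step deserving real care — is this uniform-in-$\tau$ modulus of continuity: it works solely because the terminal cost $e^{-r\tau}$ carries no parameters, so the difference of the two stopping functionals is controlled independently of the (generally different and hard-to-compare) optimal stopping rules at the two points, and no separate treatment of lower/upper semicontinuity is then required.
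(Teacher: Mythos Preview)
Your proof is correct and follows essentially the same approach as the paper: both arguments exploit $\varepsilon$-optimal stopping times, the cancellation of the terminal term $e^{-r\tau}$, and dominated convergence against an envelope of the form $\pi_c(e^{(x+\epsilon)+X_s},y-\delta)$ supplied by Assumption~\ref{AssProfit}.2. The only cosmetic difference is that the paper treats the two inequalities separately (one via a fixed $\tau^\varepsilon$ for $(x,y)$, the other via $\tau_n^\varepsilon$ for $(x_n,y_n)$ and then enlarging the integration domain to $[0,\infty)$), whereas you derive the single symmetric bound $|v(x',y')-v(x,y)|\le\Delta$ in one stroke; this is a minor streamlining, not a different method.
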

\begin{proof}

For the continuity, consider a sequence $\{(x_n,y_n):\, n\in\mathbb{N}\}
\subset \mathbb{R} \times (0,\infty)$ converging to $(x,y) \in \mathbb{R} \times (0,\infty)$. Take $\varepsilon>0$ and let
$\tau^\varepsilon:=\tau^\varepsilon(x,y)$ be an $\varepsilon$-optimal
stopping time for problem  \eqref{v} with initial values $x$ and $y$.
Then we have
\beq
\label{pasd}
v(x,y)-v(x_n,y_n) \geq \mathbb{E}\bigg\{\int_0^{\tau^\varepsilon} e^{-rt} \big[\pi_c(e^{x + X_t},y)-\pi_c(e^{x_n + X_t},y_n)\big] dt\bigg\} - \varepsilon.
\eeq
Without loss of generality, let $\{x_n:\, n\in\mathbb{N}\}
\subset (x-\epsilon, x+\epsilon)$ for a suitable $\epsilon >0$ be such
that for all $t\geq 0$ 
$$e^{x - \epsilon + X_t} \leq  e^{x_n + X_t} \leq
e^{x + \epsilon + X_t}.
$$  
Taking into account Assumptions \ref{AssProfit} and \ref{xpicnondecr}, we can apply the dominated convergence theorem on the right-hand side of \eqref{pasd} to get
\beq
\label{eq:cont0}
\limsup_{n\rightarrow\infty}v(x_n,y_n)\leq v(x,y)+\varepsilon.
\eeq
Similarly, taking $\varepsilon$-optimal stopping times
$\tau_{n}^\varepsilon:=\tau^\varepsilon(x_n,y_n)$  for problem \eqref{v} 
with initial values $x_n$ and $y_n$
one has
\begin{eqnarray}
\label{limit2}
v(x,y)-v(x_n,y_n) \hspace{-0.25cm} & \leq & \hspace{-0.25cm} \mathbb{E}\bigg\{\int_0^{\tau_{n}^\varepsilon} e^{-rt} \big[\pi_c(e^{x + X_t},y)-\pi_c(e^{x_n + X_t},y_n)\big] dt\bigg \} + \varepsilon \\
\hspace{-0.25cm} & \leq & \hspace{-0.25cm} \mathbb{E}\bigg\{\int_0^{\infty} e^{-rt} \big|\pi_c(e^{x + X_t},y)-\pi_c(e^{x_n + X_t},y_n)\big| dt\bigg\} + \varepsilon.\nonumber
\end{eqnarray}
Evoking again the dominated convergence theorem yields
\beq
\label{eq:cont1}
\liminf_{n\rightarrow\infty}v(x_n,y_n)\geq v(x,y)-\varepsilon,
\eeq
which together with \eqref{eq:cont0} implies the continuity of $v$.

To verify the second statement, let $\tau^*:=\tau^*(x_1,y)$ be an optimal stopping time with initial values $x_1$ and $y.$ Then for $x_2 < x_1$ we have 
\beq
\label{pasdd}
\nonumber 
v(x_1,y)-v(x_2,y) \geq \mathbb{E}\bigg\{\int_0^{\tau^*} e^{-rt}
\big[\pi_c(e^{x_1 + X_t},y)-\pi_c(e^{x_2 + X_t},y)\big]
dt\bigg\}\geq 0,
\eeq
since $\pi_c(\cdot, y)$ is supposed to be nondecreasing, see Assumption \ref{xpicnondecr}, cf.\ also the proof of Proposition 3.7 in \cite{Ferrari2014}.
\end{proof}

\begin{proposition}
\label{propb}
Under Assumptions \ref{AssProfit} and \ref{xpicnondecr} the
free-boundary $b$ defined in \eqref{boundary} is 
\begin{itemize}
	\item  nondecreasing,
	\item right-continuous with left limits.
\end{itemize}
\end{proposition}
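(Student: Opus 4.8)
The plan is to derive both properties of $b$ from the two conclusions of Proposition \ref{vcontinua} (continuity of $v$ and monotonicity of $x\mapsto v(x,y)$) together with the fact that $y\mapsto v(x,y)$ is decreasing, which follows from the strict concavity of $\pi(z,\cdot)$. The guiding observation is that each section of the stopping region is downward closed in $y$: if $v(x,y)=1$ and $0<y'<y$, then $v(x,y')\geq v(x,y)=1$ because $y\mapsto v(x,y)$ is decreasing, and since $v\leq 1$ this forces $v(x,y')=1$. Hence the section $\{y>0:v(x,y)=1\}$ is an interval $(0,b(x)]$.

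For the monotonicity of $b$ I would exploit the monotonicity of $v$ in $x$. Take $x_1\leq x_2$. Since $x\mapsto v(x,y)$ is nondecreasing, any $y$ with $v(x_1,y)=1$ satisfies $v(x_2,y)\geq v(x_1,y)=1$ and thus $v(x_2,y)=1$; so the section at $x_1$ is contained in that at $x_2$, and taking suprema gives $b(x_1)\leq b(x_2)$. The empty-set convention $b\equiv 0$ is consistent with this inequality, so no separate case is needed. Because $b$ is then nondecreasing, it automatically admits finite left and right limits at every point, so the existence of left limits requires no further argument; the genuine content is right-continuity, where I expect the main work to lie.

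For right-continuity I would argue by contradiction, using the closedness of the stopping region. Since $v$ is continuous on $\RR\times(0,\infty)$ and $v\leq 1$, the region $\mathcal{S}=\{v=1\}=\{v\geq 1\}$ is relatively closed, and the continuity of $v(x,\cdot)$ shows that the supremum defining $b$ in \eqref{boundary} is attained, i.e.\ $(x,b(x))\in\mathcal{S}$ whenever $b(x)>0$. Suppose $b$ fails to be right-continuous at some $x_0$; as $b$ is nondecreasing this means $\beta:=b(x_0+)=\lim_{x\downarrow x_0}b(x)>b(x_0)\geq 0$. Pick $x_n\downarrow x_0$; then $b(x_n)\geq\beta>0$, so by the downward-closedness of the sections (with the supremum attained) one has $(x_n,\beta)\in\mathcal{S}$, that is $v(x_n,\beta)=1$ for every $n$. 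Letting $n\to\infty$ and invoking continuity of $v$ at $(x_0,\beta)$ yields $v(x_0,\beta)=1$, hence $\beta\leq b(x_0)$, contradicting $\beta>b(x_0)$. Therefore $b(x_0+)=b(x_0)$ and $b$ is right-continuous.

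The delicate points I would watch are exactly the ones that make the limit passage legitimate: verifying that each section is closed from above so that the supremum in \eqref{boundary} is attained (this is what places $(x_n,\beta)$ in $\mathcal{S}$ when possibly $b(x_n)=\beta$), keeping the limiting level $\beta$ strictly positive so that the continuity of $v$ applies at $(x_0,\beta)$ (recall $v$ is only asserted continuous on $\RR\times(0,\infty)$, so the boundary $y=0$ must be avoided), and using that $b$ is real-valued so that $\beta=b(x_0+)$ is finite. The essential analytic input is thus the continuity of $v$ from Proposition \ref{vcontinua}; monotonicity of $v$ in both variables supplies the geometric structure of $\mathcal{S}$ that turns that continuity into a statement about $b$.
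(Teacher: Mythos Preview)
Your proof is correct and follows essentially the same route as the paper: monotonicity of $b$ from the monotonicity of $x\mapsto v(x,y)$, existence of one-sided limits from monotonicity, and right-continuity from the closedness of $\mathcal{S}$ (via continuity of $v$) by passing to the limit along points $(x_n,\beta)\in\mathcal{S}$ with $x_n\downarrow x_0$. The paper phrases the last step directly rather than by contradiction, sending $(x+\varepsilon,b(x+\varepsilon))\to(x,b(x+))$ and concluding $b(x+)\leq b(x)$ from the definition of $b$; your version is a bit more explicit about why the limiting level stays strictly positive and why the supremum is attained, but the underlying argument is the same.
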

\begin{proof}
The fact that $b$ is nondecreasing can be proved similarly as in
\cite{Ferrari2014} (see the proof of Corollary 3.8).
From the monotonicity, it clearly follows that $b$ admits right
and left limits at any point.
To show that $b$  is right-continuous, fix $x \in \mathbb{R}$ and
notice that for every $\varepsilon >0$ we have again by monotonicity
of $b$ that $b(x+\varepsilon) \geq b(x)$, which implies $b(x) \leq
\lim_{\varepsilon \downarrow 0}b(x+\varepsilon)=:b(x+)$. Consider now
the sequence $\{(x+\varepsilon, b(x+\varepsilon)):\ \varepsilon > 0\}
\subset \mathcal{S}$; one has $\{(x+\varepsilon,
b(x+\varepsilon)):\ \varepsilon > 0\} \rightarrow (x, b(x+))$ when
$\varepsilon \downarrow 0$ and $(x, b(x+)) \in \mathcal{S}$, since
$\mathcal{S}$ is closed by continuity of $v$ (cf.\ Proposition
\ref{vcontinua}). It then follows that $b(x+)\leq b(x)$ from the definition \eqref{boundary} and the proof is complete.
\end{proof}


The next theorem adapts Theorem 3.9 in \cite{Ferrari2014} to our exponential L\'{e}vy setting. It connects the base capacity process $l^*$ to the free-boundary $b$ of the optimal stopping problem \eqref{v} associated with the original control problem. 
\begin{theorem}
\label{identificocor}
Let $l^{*}$ be the unique optional solution of (\ref{backward}) and $b$ the free-boundary defined in (\ref{boundary}).
Under Assumptions \ref{AssProfit}, \ref{rbiggerthankappa} and \ref{xpicnondecr} one has
\beq
\label{identifico}
l^{*}_t = b(x + X_t), \qquad t\geq 0.
\eeq
\end{theorem}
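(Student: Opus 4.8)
The plan is to identify $l^{*}$ with the process $t\mapsto b(x+X_t)$ by writing down the explicit level representation of the Bank-El Karoui solution and then matching it, via the strong Markov property, with the definition \eqref{boundary} of $b$ through the value $v$ of \eqref{v}. First I would record the pointwise form of the optional solution produced by the Bank-El Karoui theorem in Step~1 of Proposition~\ref{existenceback}. Since $\ell\mapsto\pi_c(e^{x+X_s},\ell)$ is continuous and strictly decreasing with the limits prescribed in Assumption~\ref{AssProfit}, the representing process admits the description
\[
l^{*}_t = \sup\Big\{\ell>0 : \EE\Big[\int_t^{\sigma} e^{-rs}\pi_c(e^{x+X_s},\ell)\,ds + e^{-r\sigma}\,\Big|\,\mathcal{F}_t\Big] \ge e^{-rt}\ \text{ for all } \sigma\in\mathcal{T},\ \sigma\ge t\Big\}.
\]
Intuitively $l^{*}_t$ is the largest capacity level for which it is never strictly advantageous to postpone investment beyond $t$; this is exactly the $\essinf$-over-stopping-times form of the solution for a reward that is decreasing in the level, and it is the L\'evy counterpart of \cite{Ferrari2014}, Theorem~3.9.

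Next I would normalise the conditional expectation appearing inside the braces using the stationarity and independence of the increments of $X$ together with the strong Markov property. Writing $X_s=X_t+(X_s-X_t)$ and conditioning on $\mathcal{F}_t$, for every stopping time $\sigma\ge t$ one obtains, with $\sigma':=\sigma-t$,
\[
\frac{1}{e^{-rt}}\,\EE\Big[\int_t^{\sigma} e^{-rs}\pi_c(e^{x+X_s},\ell)\,ds + e^{-r\sigma}\,\Big|\,\mathcal{F}_t\Big] = \EE_{x+X_t}\Big[\int_0^{\sigma'} e^{-rs}\pi_c(e^{X_s},\ell)\,ds + e^{-r\sigma'}\Big],
\]
and as $\sigma$ ranges over stopping times $\ge t$ the shifted time $\sigma'$ ranges over all of $\mathcal{T}$. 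Hence the membership condition ``$\ge e^{-rt}$ for all $\sigma$'' is equivalent to $\inf_{\sigma'\ge 0}\EE_{x+X_t}[\,\cdots\,]\ge 1$, that is, to $v(x+X_t,\ell)\ge 1$; since $v\le 1$ this forces $v(x+X_t,\ell)=1$.

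To conclude I would invoke the monotonicity of $y\mapsto v(x,y)$ and the continuity of $v$ (Proposition~\ref{vcontinua}) together with \eqref{boundary}: the closed set $\{\ell>0:\,v(x+X_t,\ell)=1\}$ equals $(0,b(x+X_t)]$, whose supremum is $b(x+X_t)$, and $b(x+X_t)>0$ a.s.\ because $\lim_{c\to0}\pi_c=\infty$ in Assumption~\ref{AssProfit}. Combining the three displays yields $l^{*}_t=b(x+X_t)$ for every $t\ge 0$.

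The step I expect to be the main obstacle is the first one: rigorously extracting the stated level representation of $l^{*}$ from the Bank-El Karoui theorem in the present L\'evy framework. This requires checking that the change of variables and the extension of $f$ used in \eqref{identificationf} reproduce exactly the ``for all $\sigma$'' formula above, that the running-supremum structure $\sup_{\tau\le u<s}l_u$ hidden in \eqref{backward} is consistent with evaluating the solution pointwise at each level (i.e.\ that reflecting along $l^{*}$ regenerates $e^{-r\tau}$), and that the resulting object is optional with upper right-continuous paths; one must also be careful with the strict versus non-strict inequality at the critical level $\ell=b(x+X_t)$, where continuity of $v$ secures that the supremum is attained.
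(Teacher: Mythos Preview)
Your proposal is correct and follows essentially the same route as the paper: extract the level-representation of the Bank--El Karoui solution (the paper does this via the auxiliary variable $\xi^{*}=-1/l^{*}$ and the explicit formula \cite[eq.\ (23)]{BankElKaroui}), then use the Markov property to identify the conditional $\essinf$ with $v(x+X_t,\cdot)$ and hence with $b(x+X_t)$. The obstacle you flag is exactly where the paper invests its effort: to make the passage ``as $\sigma$ ranges over stopping times $\ge t$, $\sigma':=\sigma-t$ ranges over $\mathcal T$'' rigorous in the L\'evy setting, the paper moves to the canonical Skorohod space and invokes Galmarino's test (\cite[Theorem~103]{DM}) to decompose any $\tau\ge t$ as $t+\tau'(\overline\omega,\theta_t(\overline\omega))$ before applying the strong Markov property.
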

\begin{proof}
First of all, since $b$ is Borel-measurable (being monotone) and $X$
is optional it follows that the process on the right-hand side of
\eqref{identifico} is optional. Moreover, $t\mapsto b(x + X_t)$ is
upper right-continuous since $b$ is upper-semicontinuous (being
nondecreasing and right-continuous by Proposition \ref{propb}) and $t \mapsto X_t$ is right-continuous.

To prove (\ref{identifico}), the arguments in \cite{Ferrari2014} (see
the proof of Proposition $3.4)$, are easily adapted to the present
case. Hence, we have $l^{*}_t = - \frac{1}{\xi^{*}_t}$, where the
process $\xi^{*}$ admits the following representation (cf.\ also \cite{BankElKaroui}, formula $(23)$ on page $1049$)
\begin{equation}
\label{rappresentoxistar}
\xi^{*}_t = \sup\bigg\{ l < 0 : \essinf_{\tau \geq t}\mathbb{E}\bigg\{\int_t^{\tau}e^{-r(s-t)}\pi_c\Big(e^{x + X_s}, -\frac{1}{l}\Big)\,ds + e^{-r(\tau - t)}\Big|\mathcal{F}_t\bigg\} = 1\bigg\}.
\end{equation}
To take care of the conditional expectation in \eqref{rappresentoxistar} it is convenient to proceed as in the proof of Theorem $3.9$ in \cite{Ferrari2014}, and work on the canonical probability space $(\overline{\Omega}, \overline{\mathbb{P}})$. However, to take into account our L\'{e}vy setting we let $\overline{\Omega}:=\mathcal{D}_0([0,\infty))$ be the Skorohod space of all c\`{a}dl\`{a}g functions $\overline{\omega}$ on $[0,\infty)$ such that $\omega_0=0$, endowed with Skorohod's topology and let $\mathcal{F}$ denote its Borel $\sigma$-field. Moreover, $\overline{\mathbb{P}}$ is the probability measure on $\overline{\Omega}$ under which the coordinate process $X_u(\overline{\omega})=\overline{\omega}_u$, $u\geq 0$, is a L\'{e}vy process and the shift operator $\theta_u: \overline{\Omega} \mapsto \overline{\Omega}$ is defined by $\theta_u(\overline{\omega})(s) = \overline{\omega}_{u+s}$, for $\overline{\omega} \in \overline{\Omega}$ and $u,s\geq 0$. Finally, we denote by $(\mathcal{F}_u)_{u \geq 0}$ the filtration, where $\mathcal{F}_u$ is generated by $s \mapsto \overline{\omega}_s$, $s\leq u$, and augmented by the $\overline{\mathbb{P}}$-null sets. 
By Theorem 103, p.\ 151 in \cite{DM} -- based on Galmarino's test -- any stopping time $\tau \in \mathcal{T}$, $\tau \geq t$, can be written as $\tau(\overline{\omega})= t + \tau'(\overline{\omega}, \theta_t(\overline{\omega}))$, with $\tau': \overline{\Omega} \times \overline{\Omega} \mapsto [0,\infty]$, $\mathcal{F}_t \otimes \mathcal{F}_{\infty}$-measurable and such that $\tau'(\overline{\omega}, \cdot)$ is a stopping time for each $\overline{\omega} \in \overline{\Omega}$. 
In this way, defining the $\mathcal{F}_t \otimes \mathcal{F}_{\infty}$-measurable positive random variable
\beq
\label{ZMarkov}
Z(\omega,\omega^{'}) := \int_0^{\tau'(\omega,\omega^{'})}e^{-ru}\pi_c\Big(e^{x + \omega^{'}_u}, -\frac{1}{l}\Big)\,du + e^{-r\tau'(\omega, \omega^{'})},
\eeq
and setting $Z^t(\overline{\omega}):=
Z(\overline{\omega},\theta_t(\overline{\omega}))$, after a simple
change of variable the term inside the conditional expectation in
\eqref{rappresentoxistar} equals $Z^t(\overline{\omega})$. More precisely,
\begin{eqnarray*}
&&\hskip-.5cm
\int_t^{\tau(\overline{\omega})}e^{-r(s-t)}\pi_c\Big(e^{x + X_s(\overline{\omega})}, -\frac{1}{l}\Big)\,ds + e^{-r(\tau(\overline{\omega}) - t)}
\\
&&\hskip3cm
=\int_0^{\tau'(\overline{\omega},\theta_t(\overline{\omega} ))}{\text
  e}^{-ru}\pi_c\Big(e^{x + \theta_t(\overline{\omega})(u)},
-\frac{1}{l}\Big)\,du + {\text
  e}^{-r\tau'(\overline{\omega},\theta_t(\overline{\omega}))}
\\
&&\hskip3cm
 =
Z^t(\overline{\omega}).
\end{eqnarray*}
An application of the strong Markov property (see, e.g., Exercise 3.19 at p.\ 111 of \cite{RevuzYor}) thus implies
$$\mathbb{E}\{Z^t\,|\,\mathcal{F}_t\}(\overline{\omega}) =
\mathbb{E}_{ X_t(\overline{\omega})}\{Z(\overline{\omega}, \cdot)\},$$
for all $\overline{\omega} \in \overline{\Omega}.$ Recalling the
definition of $v$ in (\ref{v}) and using \eqref{ZMarkov} it holds
  for all 
$\overline{\omega} \in \overline{\Omega}$
$$\xi^{*}_t(\overline{\omega})=\sup\{ l < 0\,:\,\,v(x + X_t(\overline{\omega}), -\frac{1}{l}) = 1\}.$$
Finally, employing arguments as those  in \cite{Ferrari2014} (see
the proof of Theorem 3.9)  we may write for each $\overline{\omega}
\in \overline{\Omega}$ and $t \geq 0$ 
$$l^{*}_t(\overline{\omega}) = -
\frac{1}{\xi^{*}_t(\overline{\omega})} = \sup\{ y > 0\,:\,\,v(x +
X_t(\overline{\omega}), y) = 1\} = b(x + X_t(\overline{\omega})),$$ 
where the last equality above follows from (\ref{boundary}). This
completes the proof.

\end{proof}

At this point it is clear that if $\nu^*$ of \eqref{optimalsol} is
admissible and such that $\mathcal{J}_{x,y}(\nu^*)<\infty$, hence
optimal, then the free-boundary $b$ of the optimal stopping problem
\eqref{v} is indeed the optimal investment boundary of problem
\eqref{valuefunction}. Moreover, then also the continuation region $\mathcal{C}$ and the stopping region $\mathcal{S}$ are the inaction and the action region, respectively.
In fact, due to Theorem \ref{identificocor} the optimal control $\nu^*$ of \eqref{optimalsol} can be expressed as
\beq
\label{optimalsol-b}
\nu^{*}_t= \sup_{0 \leq s < t}(b(x + X_s) - y) \vee 0,\quad t>0, \qquad \nu^{*}_0=0;
\eeq
i.e., it is the least effort needed at time $t$ to reflect the production capacity at the (random) time-dependent boundary $l^{*}_t=b(x + X_t)$, $t \geq 0$.

\begin{remark}
Combining Theorem \ref{identificocor} and Proposition \ref{propb} we recover \cite{RiedelSu}, Theorem 5.1, in which, via an argument different than ours, it is shown that the base capacity is monotonically increasing in the underlying shock process; namely, if $l^*$ is the base capacity associated with a L\'{e}vy process $X$ and $\tilde{l}^*$ is the base capacity associated with another L\'{e}vy process $\widetilde{X}$ such that $\widetilde{X}_t \leq X_t$ for all $t\geq 0$ a.s., then $\tilde{l}^*_t \leq l^*_t$ for all $t\geq 0$ a.s.
\end{remark}


\section{The Equation for the Optimal Investment Boundary}
\label{sectionintegralequation}

In this section we present our main result. Using Propositions \ref{existenceback} and \ref{identificocor} and the Wiener-Hopf factorization we firstly derive an integral equation for the optimal investment boundary $b$ (see Theorem \ref{integraleqthm} below). It is shown that if the L\'{e}vy process hits every point in $\mathbb{R}$ with positive probability then this equation has a unique solution.
In Theorem \ref{algebraiceqthm} another simpler equation is presented which anyway characterizes the optimal investment boundary. Using such equation we, moreover, show that the boundary is continuous. To the best of our knowledge a proof of the continuity of the free-boundary in infinite time-horizon, one-dimensional parameter-dependent optimal stopping problems of type \eqref{v} for exponential L\'{e}vy processes appears here for the first time.


To simplify exposition, in the rest of this section we will assume that $\nu^*$ of \eqref{optimalsol} is admissible and such that $\mathcal{J}_{x,y}(\nu^*)<\infty$, and hence optimal. This way the free-boundary $b$ of the optimal stopping problem \eqref{v} is indeed the optimal investment boundary of problem \eqref{valuefunction}.

\begin{theorem}
\label{integraleqthm}
Let Assumptions \ref{AssProfit}, \ref{rbiggerthankappa} and \ref{xpicnondecr} hold. Let $M_t:=\sup_{0\leq u \leq t}X_u$, $I_t:=\inf_{0\leq u \leq t}X_u$. Moreover, let $T_r$ denote an exponentially distributed random time with parameter $r$ independent of $X$. 
Then, the optimal investment boundary $b$ between the inaction and the action region is a positive nondecreasing right-continuous with left limits solution to the integral equation
\beq
\label{integraleq}
\int_0^{\infty}\mathbb{E}\Big\{\pi_c\Big(e^{y + z + I_{T_r}}, f(y + z)\Big)\Big\}\mathbb{P}(M_{T_r} \in dz) = r.
\eeq
Moreover, if the L\'{e}vy process $X$ hits every point of $\mathbb{R}$ with positive probability, then the solution of \eqref{integraleq} is unique.
\end{theorem}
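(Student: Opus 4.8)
The plan is to read off \eqref{integraleq} from the representation equation \eqref{backward} evaluated at $\tau=0$, converting the discounted time-integral into an expectation at an independent exponential time and then applying the Wiener-Hopf factorization. First I would set $\tau=0$ in \eqref{backward} and use Theorem \ref{identificocor} ($l^*_u=b(x+X_u)$) to rewrite it as
\beq
\mathbb{E}\left\{\int_0^\infty e^{-rs}\,\pi_c\Big(e^{x+X_s},\,\sup_{0\le u<s}b(x+X_u)\Big)\,ds\right\}=1. \nonumber
\eeq
Because $b$ is nondecreasing (Proposition \ref{propb}), $\sup_{0\le u<s}b(x+X_u)=b(x+\sup_{0\le u<s}X_u)=b(x+M_{s-})$ for a.e.\ $s$, and since $M_{s-}=M_s$ off the countable set of jump times of the running maximum the inner argument may be taken to be $b(x+M_s)$ inside the $ds$-integral. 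Writing $\int_0^\infty e^{-rs}h(s)\,ds=\frac{1}{r}\mathbb{E}[h(T_r)]$ for $T_r\sim\mathrm{Exp}(r)$ independent of $X$ turns the identity into $\mathbb{E}[\pi_c(e^{x+X_{T_r}},b(x+M_{T_r}))]=r$. At this point the Wiener-Hopf factorization enters: at the independent time $T_r$ the variables $M_{T_r}$ and $X_{T_r}-M_{T_r}$ are independent and $X_{T_r}-M_{T_r}\stackrel{d}{=}I_{T_r}$; decomposing $X_{T_r}=M_{T_r}+(X_{T_r}-M_{T_r})$ and conditioning on $\{M_{T_r}=z\}$ yields \eqref{integraleq} (with $x$ renamed $y$, the computation being valid for every starting point). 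The interchanges of expectation and integration are justified by Assumption \ref{AssProfit}.2 together with $\pi_c\ge\kappa\ge0$.

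That $b$ is nondecreasing and right-continuous with left limits is Proposition \ref{propb}; positivity follows from $\lim_{c\downarrow0}\pi_c(z,c)=\infty$ in Assumption \ref{AssProfit}, which makes immediate investment optimal once capacity is small, so that the set in \eqref{boundary} is nonempty for every $x$. For uniqueness my strategy is to reduce \eqref{integraleq} to the scalar equation $\mathbb{E}[\pi_c(e^{w+I_{T_r}},b(w))]=r$ valid for each $w\in\mathbb{R}$ (the equation of Theorem \ref{algebraiceqthm}), which admits a unique solution: for fixed $w$ the map $c\mapsto\mathbb{E}[\pi_c(e^{w+I_{T_r}},c)]$ is continuous and strictly decreasing, from $+\infty$ as $c\downarrow0$ down to $\kappa<r$ as $c\uparrow\infty$, so the intermediate value theorem gives a unique root $b(w)$. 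Setting $\phi(w):=\mathbb{E}[\pi_c(e^{w+I_{T_r}},b(w))]$ and $\mu:=\mathrm{law}(M_{T_r})$, equation \eqref{integraleq} reads $\int_0^\infty\phi(y+z)\,\mu(dz)=r$ for all $y$, while the scalar equation is $\phi\equiv r$; since $\mu$ is a probability measure the scalar equation clearly implies \eqref{integraleq}, and the whole matter reduces to the converse. Comparing two solutions with functions $\phi_1,\phi_2$, the difference $g:=\phi_1-\phi_2$ satisfies $\int_0^\infty g(y+z)\,\mu(dz)=0$ for all $y$, so uniqueness amounts to the injectivity of this correlation operator against the law of $M_{T_r}$.

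This injectivity is where the hypothesis that $X$ hits every point of $\mathbb{R}$ is used. It ensures that the $r$-resolvent, and with it the Wiener-Hopf factors — the laws of $M_{T_r}$ and $I_{T_r}$ — are absolutely continuous, so that $\phi$ and $g$ are continuous and the correlation can be treated by transform methods; the decisive structural fact is that $M_{T_r}$ is infinitely divisible, so that its characteristic function $\mathbb{E}[e^{i\theta M_{T_r}}]$ never vanishes. Together these force $g\equiv0$, hence $\phi\equiv r$, i.e.\ the scalar equation, and then strict monotonicity of $\pi_c$ in its second argument pins down $b$ uniquely. I expect the crux of the whole proof to be exactly this passage from the integral to the scalar equation: extracting from ``hits every point'' the precise regularity of the ladder laws and the non-degeneracy of the relevant transform, and justifying the inversion for functions $\phi,g$ that need not decay at infinity.
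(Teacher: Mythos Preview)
Your derivation of the integral equation is essentially the paper's: start from \eqref{backward}, use $l^*_u=b(x+X_u)$ and the monotonicity of $b$ to turn the running supremum into $b(x+M_s)$, pass to the independent exponential time $T_r$, and apply the Wiener--Hopf factorization. The paper works at a generic stopping time $\tau$ and then invokes the strong Markov property, whereas you take $\tau=0$ and vary the starting point; these are equivalent manoeuvres.

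Where you diverge is in the uniqueness argument, and here your route is both different and incomplete. The paper's argument (deferred to \cite{Ferrari2014}, Theorem~3.11) is to run the derivation \emph{backwards}: any positive nondecreasing c\`adl\`ag solution $\tilde b$ of \eqref{integraleq} yields, via the strong Markov property, a process $\tilde l_t:=\tilde b(x+X_t)$ solving \eqref{backward}; but \eqref{backward} has a \emph{unique} optional upper right-continuous solution (Proposition~\ref{existenceback}, from Bank--El Karoui), so $\tilde b(x+X_t)=b(x+X_t)$ for all $t\ge 0$ a.s. If $\tilde b\ne b$ at some point, right-continuity gives an open interval on which they differ, and the hypothesis that $X$ hits every point of $\mathbb{R}$ with positive probability produces the contradiction. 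That is the entire role of the hitting hypothesis here.

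Your proposal instead tries to pass from \eqref{integraleq} to the scalar equation $\mathbb{E}[\pi_c(e^{w+I_{T_r}},b(w))]=r$ by showing that convolution with the law of $M_{T_r}$ is injective, appealing to infinite divisibility of $M_{T_r}$ and non-vanishing of its characteristic function. You correctly flag this as the crux, and it is a genuine gap: the non-vanishing of the transform does not by itself deliver injectivity on the class of functions in play (bounded below by $\kappa$, not integrable, no decay), and the hitting hypothesis does not obviously supply the missing ingredient. The paper's approach avoids all of this harmonic-analytic machinery by recycling the uniqueness already established for \eqref{backward}; I would recommend you adopt that line instead.
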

\begin{proof}
Since $l^{*}$ solves (\ref{backward}) and $l^{*}_t = b(x + X_t)$ (cf.\ Theorem \ref{identificocor}), then $b$ satisfies
\begin{eqnarray}
\label{eqint1}
r &\hspace{-0.25cm} = \hspace{-0.25cm}& \mathbb{E}\bigg\{\int_{\tau}^{\infty}r e^{-r(s-\tau)}\pi_c\Big(e^{x + X_s}, \sup_{\tau \leq u < s}b(x + X_u)\Big) ds \Big| \mathcal{F}_{\tau}\bigg\} \nonumber \\
&\hspace{-0.25cm} = \hspace{-0.25cm}& \mathbb{E}\bigg\{\int_{0}^{\infty}r e^{-rt}\pi_c\Big(e^{x + X_{\tau} + (X_{t + \tau} - X_{\tau})}, b\big(\sup_{0 \leq u < t}(x + X_{\tau} + X_{u + \tau} - X_{\tau})\big)\Big) dt \Big| \mathcal{F}_{\tau}\bigg\},
\end{eqnarray}
for any $\tau \in \mathcal{T}$, where in the second equality we have
used the fact that $b$ is nondecreasing. 
Using the independence of increments and the strong Markov property
 of $X$ it is seen that  (\ref{eqint1}) is equivalent with
\begin{equation*}
\label{eqint2}
\mathbb{E}_y\bigg\{\int_{0}^{\infty}r e^{-rt}\pi_c\Big(e^{X_t}, b(M_t)\Big) dt \bigg\} = r, \qquad \forall y \in \mathbb{R},
\end{equation*}
and, furthermore, with
\begin{equation}
\label{eqint3}
\mathbb{E}_y\Big\{\pi_c\Big(e^{X_{T_r}}, b(M_{T_r})\Big)\Big\} = r, \qquad \forall y \in \mathbb{R}.
\end{equation}
But now, by the Wiener-Hopf factorization (cf.\ \cite{Kyprianou},
Chapter 6) we know that $X_{T_r} - M_{T_r}$ is independent of
$M_{T_r}$ and $X_{T_r} - M_{T_r}$ has the same law as  $\hat{I}_{T_r}$ with $\hat{I}$ an independent copy of $I$, and then we can write from \eqref{eqint3}
\begin{eqnarray*}
\label{eqint4}
r \hspace{-0.25cm}& = &\hspace{-0.25cm} \mathbb{E}_y\Big\{\pi_c\Big(e^{X_{T_r}}, b(M_{T_r})\Big)\Big\} = \mathbb{E}\Big\{\mathbb{E}\Big\{\pi_c\Big(e^{y + X_{T_r}}, b(y + M_{T_r})\Big)\Big| M_{T_r}\Big\}\Big\} \nonumber \\
& \hspace{-0.25cm} = \hspace{-0.25cm}& \mathbb{E}\Big\{\mathbb{E}\Big\{\pi_c\Big(e^{y + M_{T_r} + (X_{T_r} - M_{T_r})}, b(y + M_{T_r})\Big)\Big| M_{T_r}\Big\}\Big\}  \\
& \hspace{-0.25cm} = \hspace{-0.25cm}&\int_0^{\infty}\mathbb{E}\Big\{\pi_c\Big(e^{y + z + I_{T_r}}, b(y + z)\Big)\Big\}\mathbb{P}(M_{T_r} \in dz), \nonumber
\end{eqnarray*}
where spatial homogeneity of $X$ has been used for the second equality above.

Finally, if  $\tau^{\{x_o\}}:=\inf\{t \geq 0: X_t=x_o\}<\infty$ with
positive probability for all $x_o \in \mathbb{R}$, then the uniqueness of
a positive, nondecreasing right-continuous with left limits $b$ satisfying
(\ref{integraleq}) can be proved arguing by contradiction as in
\cite{Ferrari2014} (see the proof of Theorem 3.11). 

\end{proof}
\begin{remark}
\label{levyregularity}
Sufficient conditions ensuring that the L\'{e}vy process $X$ hits
every point of $\mathbb{R}$ with positive probability can be found, e.g., in Theorem 7.12 of \cite{Kyprianou}. 
Examples of L\'{e}vy processes with such a property are any L\'{e}vy process with Gaussian component (including the case of jump-diffusion processes that play an important role in Financial Economics) or symmetric $\alpha$-stable L\'{e}vy processes with $\alpha \in (1,2)$ (see Section 7.5 and Exercise 7.6 in \cite{Kyprianou}). 
Concerning spectrally one-sided L\'{e}vy processes one has that the
set $C:=\{x \in \mathbb{R}: \mathbb{P}(\tau^{\{x\}} < +\infty)>0\}$
as defined in 
Theorem 7.12 of \cite{Kyprianou} is not empty (since spectrally one-sided L\'{e}vy processes creep) and therefore condition (7.21) therein is satisfied and (i)-(iii) apply accordingly.
\end{remark}

The Wiener-Hopf factorization in the proof of Theorem
\ref{integraleqthm} replaces in our L\'{e}vy setting the use of the
joint law of the position of a regular, one-dimensional diffusion and
its running supremum evaluated at an independent exponential time
(cf.\ \cite{CsakiFoldesSalminen} p.\ 185 and \cite{BorodinSalminen} p.\ 26) exploited in \cite{Ferrari2014}, Theorem 3.11, in the diffusive setting. It is also worth noticing that the Wiener-Hopf factorization has been recognized as a useful tool for solving one-dimensional, infinite time-horizon, optimal stopping problems for L\'{e}vy processes as shown in \cite{Boy}, \cite{ChrSalBao}, \cite{Deligiannidisetal}, \cite{Mordeki}, \cite{MordekiSalminen}, \cite{Salminen}, among others.

The next theorem represents our main result. It shows that in order to have a solution to \eqref{integraleq} it suffices to find a solution of a simpler equation.

\begin{theorem}
\label{algebraiceqthm}
Under Assumptions \ref{AssProfit}, \ref{rbiggerthankappa} and
\ref{xpicnondecr}, there exists a unique positive function
$\hat{b}$ satisfying the equation 
\beq
\label{algebraiceq}
\mathbb{E}\Big\{\pi_c\Big(e^{u + I_{T_r}}, f(u)\Big)\Big\} = r, \qquad u \in \mathbb{R}.
\eeq
The function $\hat b$ is nondecreasing and continuous.
Moreover, if the L\'{e}vy process $X$ hits every point of $\mathbb{R}$ with positive probability, then $\hat{b}$ is the optimal investment boundary between the inaction and the action region.
\end{theorem}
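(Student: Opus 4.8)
The plan is to treat the defining relation \eqref{algebraiceq} pointwise in $u$ and to exploit that, unlike the integral equation \eqref{integraleq}, it decouples across $u$. Introduce
$$G(u,c) := \mathbb{E}\Big\{\pi_c\Big(e^{u+I_{T_r}},\,c\Big)\Big\}, \qquad u \in \mathbb{R},\ c>0,$$
so that $\hat b(u)$ is to be characterized as the unique root $c$ of $G(u,c)=r$. First I would fix $u$ and study $c\mapsto G(u,c)$. Since $I_{T_r}=\inf_{0\le s\le T_r}X_s\le 0$ a.s. and $z\mapsto\pi_c(z,c)$ is nondecreasing (Assumption \ref{xpicnondecr}), the integrand is dominated by the finite constant $\pi_c(e^{u},c)$, so $G(u,\cdot)$ is finite. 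By Assumption \ref{AssProfit}.1 the map $c\mapsto\pi_c(z,c)$ is continuous, strictly decreasing, with $\pi_c\uparrow\infty$ as $c\downarrow 0$ and $\pi_c\downarrow\kappa$ as $c\uparrow\infty$; monotone and dominated convergence (via the above domination) transfer all four properties to $G(u,\cdot)$. Hence $G(u,\cdot)$ is a continuous strictly decreasing bijection of $(0,\infty)$ onto $(\kappa,\infty)$. Because $r>\kappa$ by Assumption \ref{rbiggerthankappa}, there is exactly one $\hat b(u)\in(0,\infty)$ with $G(u,\hat b(u))=r$, giving existence, uniqueness and positivity.

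For monotonicity, note that for fixed $c$ the map $u\mapsto G(u,c)$ is nondecreasing, again by Assumption \ref{xpicnondecr} and the fact that $u\mapsto e^{u+I_{T_r}}$ is increasing. If $u_1<u_2$ then $G(u_2,\hat b(u_1))\ge G(u_1,\hat b(u_1))=r=G(u_2,\hat b(u_2))$, and strict monotonicity of $G(u_2,\cdot)$ forces $\hat b(u_1)\le\hat b(u_2)$. For continuity---the genuinely new point---I would first establish joint continuity of $G$: for $(u_n,c_n)\to(u,c)$ with $c>0$, pointwise continuity of $\pi_c$ together with the domination $\pi_c(e^{u_n+I_{T_r}},c_n)\le\pi_c(e^{u+1},c-\delta)$ on a small box around $(u,c)$ yields $G(u_n,c_n)\to G(u,c)$ by dominated convergence. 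Since $\hat b$ is monotone it has one-sided limits; along $u_n\downarrow u$ one has $\hat b(u_n)\downarrow\hat b(u+)\ge\hat b(u)>0$, and passing to the limit in $G(u_n,\hat b(u_n))=r$ gives $G(u,\hat b(u+))=r$, whence $\hat b(u+)=\hat b(u)$ by uniqueness of the root; the left limit is handled identically. This proves continuity.

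Finally, to identify $\hat b$ with the optimal boundary $b$ under the hitting assumption, I would verify that $\hat b$ solves the integral equation \eqref{integraleq}. After the substitution $u=y+z$, the inner expectation in \eqref{integraleq}---taken over the independent copy of $I_{T_r}$ furnished by the Wiener--Hopf factorization, which has the same law as $I_{T_r}$---is precisely the left-hand side of \eqref{algebraiceq} evaluated at $u=y+z$, hence equal to $r$. Since $M_{T_r}=\sup_{0\le s\le T_r}X_s\ge 0$ a.s., the law $\mathbb{P}(M_{T_r}\in dz)$ is carried by $[0,\infty)$, so the outer integral collapses to $r\int_0^\infty\mathbb{P}(M_{T_r}\in dz)=r$. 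Thus $\hat b$ is a positive, nondecreasing, right-continuous (indeed continuous) solution of \eqref{integraleq}; because $b$ solves the same equation (Theorem \ref{integraleqthm}), which under the hitting condition has a unique solution in this class, we conclude $\hat b=b$.

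I expect the main obstacle to be the continuity step. Although the decoupled form of \eqref{algebraiceq} makes it far more tractable than the integral equation \eqref{integraleq}, one must carefully secure the dominated-convergence bound uniformly on a neighborhood of $(u,c)$ and ensure that the one-sided limits $\hat b(u\pm)$ remain strictly positive, so that joint continuity of $G$---established only for $c>0$---may legitimately be invoked there.
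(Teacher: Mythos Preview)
Your proposal is correct and follows essentially the same route as the paper: define the auxiliary function $G(u,c)=\mathbb{E}\{\pi_c(e^{u+I_{T_r}},c)\}$ (the paper writes $\Phi=G-r$), obtain existence and uniqueness of $\hat b(u)$ from the strict monotonicity and limits of $G(u,\cdot)$, deduce monotonicity of $\hat b$ from the monotonicity of $G(\cdot,c)$, prove continuity by dominated convergence along one-sided limits and uniqueness of the root, and finally identify $\hat b$ with $b$ by checking that $\hat b$ solves the integral equation \eqref{integraleq} and invoking its uniqueness under the hitting assumption. The only cosmetic differences are that you phrase the continuity step via joint continuity of $G$ rather than applying dominated convergence directly along the sequence, and that you cite the uniqueness in Theorem~\ref{integraleqthm} directly whereas the paper routes the identification through the Bank--El Karoui representation; in substance both arguments coincide.
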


\begin{proof}

We start with showing that \eqref{algebraiceq} admits at most one positive solution $\hat{b}$ such that it is nondecreasing and continuous.
Define the function
\beq
\label{Phi}
\Phi(u,y):= \mathbb{E}\Big\{\pi_c\Big(e^{u + I_{T_r}}, y\Big)\Big\} - r, \quad (u,y) \in \mathbb{R} \times (0,\infty).
\eeq
It is not hard to see that $\Phi(u,\cdot)$ is (strictly) decreasing for any $u \in \mathbb{R}$ due to Assumption \ref{AssProfit}. Moreover, thanks again to Assumption \ref{AssProfit}, we can apply the monotone convergence theorem to show that $\Phi(u,\cdot)$ is continuous on $(0,\infty)$, $\lim_{y\downarrow 0} \Phi(u,y) = \infty$ and $\lim_{y\uparrow \infty} \Phi(u,y) = \kappa-r < 0$ for any $u \in \mathbb{R}$, where the last limit is strictly negative by Assumption \ref{rbiggerthankappa}.
Hence there exists a unique positive $\hat{b}(u)$, $u \in \mathbb{R}$, solving \eqref{algebraiceq}.

To prove that $\hat{b}$ is nondecreasing, fix $\varepsilon > 0$ and notice that \eqref{Phi} and the fact that $\hat{b}$ solves \eqref{algebraiceq} imply
\begin{eqnarray*}
0 \hspace{-0.25cm} & = & \hspace{-0.25cm} \Phi(u + \varepsilon, \hat{b}(u + \varepsilon)) - \Phi(u, \hat{b}(u)) = \Phi(u + \varepsilon, \hat{b}(u + \varepsilon)) - \Phi(u, \hat{b}(u + \varepsilon)) \nonumber \\
&& \hspace{0.1cm} +\, \Phi(u, \hat{b}(u + \varepsilon)) - \Phi(u, \hat{b}(u)) \geq \Phi(u, \hat{b}(u + \varepsilon)) - \Phi(u, \hat{b}(u)), \nonumber
\end{eqnarray*} 
where the inequality above follows since, by Assumption \ref{xpicnondecr}, $\Phi(\cdot,y)$ is nondecreasing for any $y \in (0,\infty)$.
Hence, one has $\Phi(u, \hat{b}(u + \varepsilon)) - \Phi(u,
\hat{b}(u)) \leq 0$ and therefore $\hat{b}(u+\varepsilon) \geq
\hat{b}(u)$, $u \in \mathbb{R}$, because $ y\mapsto \Phi(u,y)$ is nonincreasing.

We next show the continuity of $\hat{b}$. We consider first its left-continuity.
Fix $u \in \mathbb{R}$, take a sequence $\{u_n:\, n \in \mathbb{N}\}
\subset \mathbb{R}$ such that $u_n \uparrow u$ as $n \uparrow \infty$,
and define $\hat{b}(u-):=\lim_{n \uparrow \infty}\hat{b}(u_n)$. Without loss of generality, we may think that
$u \geq u_n \geq u - \epsilon$, for a suitable $\epsilon > 0$, so that
$\hat{b}(u_n) \geq \hat{b}(u-\epsilon)$ by the monotonicity of
$\hat{b}$ and the fact that $e^{u_n + I_{T_r}} \leq e^u$ since $I_{T_r} \leq 0$. It thus follows from the concavity of $c \mapsto \pi(z,c)$
and the monotonicity of $z \mapsto \pi_c(z,c)$ (cf.\ Assumptions
\ref{AssProfit} and \ref{xpicnondecr}, 
respectively) that $\pi_c(e^{u_n +
  I_{T_r}}, \hat{b}(u_n)) \leq \pi_c(e^{u},
\hat{b}(u-\epsilon)).$ Note that for $\{u_n:\, n \in \mathbb{N}\}$ as above we have 
$$r = \mathbb{E}\Big\{\pi_c\Big(e^{u_n + I_{T_r}}, \hat{b}(u_n)\Big)\Big\}, \qquad n \in \mathbb{N}.$$
Letting $n\uparrow \infty$ yields
\beq
\label{left2}
r = \lim_{n \uparrow \infty}\mathbb{E}\Big\{\pi_c\Big(e^{u_n + I_{T_r}}, \hat{b}(u_n)\Big)\Big\} = \mathbb{E}\Big\{\pi_c\Big(e^{u + I_{T_r}}, \hat{b}(u-)\Big)\Big\},
\eeq
where the dominated convergence theorem and joint continuity of $\pi$ (cf.\ Assumption \ref{AssProfit}) are used.
We then conclude that $\hat{b}(u-)=\hat{b}(u)$ by the uniqueness of the solution of \eqref{algebraiceq}.
On the other hand, taking a sequence $\{u_n:\, n \in \mathbb{N}\}
\subset \mathbb{R}$ such that $u_n \downarrow u$ as $n \uparrow
\infty$ and following similar arguments as above, one can prove also
right-continuity of $\hat{b}$. Consequently, $\hat b$ is continuous.

Clearly, the positive, nondecreasing continuous $\hat{b}$ solving
\eqref{algebraiceq} also solves \eqref{integraleq}, and then the
positive, upper right-continuous optional process
$\hat{l}_t:=\hat{b}(x + X_t)$ solves \eqref{backward}. Hence, by
Proposition \ref{existenceback} and Theorem \ref{identificocor} we
have $\hat{l}_t:=\hat{b}(x + X_t) = b(x + X_t)= l^*_t$, up to the indistinguishability. The proof is completed by arguing similarly as in the proof of Theorem \ref{integraleqthm} that $\hat{b}=b$.

\end{proof}

As a remarkable by-product of Theorem \ref{algebraiceqthm} we have the following result.

\begin{corollary}
\label{cor-bcont}
If the L\'{e}vy process $X$ hits every point of $\mathbb{R}$ with positive probability, then the optimal investment boundary $b$ between the inaction and the action region is continuous.
\end{corollary}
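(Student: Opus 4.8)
The plan is to deduce the continuity of $b$ directly from Theorem \ref{algebraiceqthm}. That theorem already does almost all the work: it constructs a function $\hat{b}$ by solving the pointwise equation \eqref{algebraiceq}, proves that $\hat{b}$ is nondecreasing and \emph{continuous}, and then shows, under the hitting assumption, that this $\hat{b}$ coincides with the optimal investment boundary $b$. So the corollary is essentially a repackaging of the last assertion of Theorem \ref{algebraiceqthm}: once we know $b=\hat{b}$ and $\hat{b}$ is continuous, there is nothing left to prove.

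Concretely, I would proceed as follows. First I would invoke the hitting hypothesis, which is exactly the standing assumption of the relevant part of Theorem \ref{algebraiceqthm}. Under this assumption, the final line of the proof of Theorem \ref{algebraiceqthm} gives $b=\hat{b}$; indeed, one first shows $\hat{l}_t:=\hat{b}(x+X_t)$ solves the representation equation \eqref{backward}, so by the uniqueness in Proposition \ref{existenceback} together with Theorem \ref{identificocor} one has $\hat{b}(x+X_t)=b(x+X_t)=l^*_t$ up to indistinguishability, and then an argument as in the proof of Theorem \ref{integraleqthm} upgrades this pathwise identity to the pointwise identity $\hat{b}=b$ on all of $\mathbb{R}$. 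Second, I would recall that Theorem \ref{algebraiceqthm} establishes the continuity of $\hat{b}$ on $\mathbb{R}$. Combining the two facts yields that $b=\hat{b}$ is continuous, which is the claim.

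The genuinely substantive content — the construction of $\hat{b}$ via the strictly monotone, continuous function $\Phi(u,\cdot)$ of \eqref{Phi} and the delicate left- and right-continuity arguments based on dominated convergence and the uniqueness of the solution of \eqref{algebraiceq} — is already carried out inside Theorem \ref{algebraiceqthm}, so I would not reproduce it. The only point requiring a word of care is the passage from the indistinguishability of the processes $\hat{b}(x+X_t)$ and $b(x+X_t)$ to the equality of the deterministic functions $\hat{b}$ and $b$ as functions on $\mathbb{R}$; this is where the hitting assumption is used, since it guarantees that $X_t$ (started from $x$) reaches every level $u\in\mathbb{R}$ with positive probability, forcing $\hat{b}(u)=b(u)$ for every $u$. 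Thus the corollary follows with essentially no new argument, and the main obstacle — proving continuity of the candidate boundary — has already been dispatched in the preceding theorem.

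\begin{proof}
Assume the L\'{e}vy process $X$ hits every point of $\mathbb{R}$ with positive probability. Then, by Theorem \ref{algebraiceqthm}, the unique positive solution $\hat{b}$ of \eqref{algebraiceq} coincides with the optimal investment boundary $b$ between the inaction and the action region. Since Theorem \ref{algebraiceqthm} also asserts that $\hat{b}$ is continuous, it follows that $b=\hat{b}$ is continuous.
\end{proof}
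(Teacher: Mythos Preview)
Your proposal is correct and matches the paper's own treatment: the corollary is stated without a separate proof, as an immediate by-product of Theorem \ref{algebraiceqthm}. Your argument---invoke the last assertion of that theorem to get $b=\hat{b}$ under the hitting hypothesis, then use the continuity of $\hat{b}$ established there---is exactly what the paper intends.
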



\begin{remark}
It is worth noting that in the separable case $\pi(z,c) = z G(c)$, with $G$ continuously differentiable, increasing, strictly concave and satisfying Inada conditions, equation \eqref{algebraiceq} easily translates into equation (15) of \cite{Boy2004} (see also \cite{RiedelSu}, Example 3.3), where $H$ therein is the (generalized) inverse of $b$. However, our result is much more general than that of \cite{Boy2004}. Differently to \cite{Boy2004} we have indeed not assumed that investment strategies stay bounded above and, moreover, our equation \eqref{algebraiceq} holds for every operating profit satisfying Assumption \ref{AssProfit}, hence not necessarily separable. According to the discussion in \cite{Boy2004}, Section III, equation \eqref{algebraiceq} may be seen as a correction to the Net Present Value rule, or Marshallian law, taking into account the irreversibility of investment strategies.
\end{remark}

\begin{remark}
Combining Theorem \ref{identificocor} and Corollary \ref{cor-bcont}, we find that if $X$ hits every point of $\mathbb{R}$ with positive probability, then the base capacity process $l^*$ of problem \eqref{valuefunction}, which in general is only known to be upper right-continuous by \cite{BankElKaroui}, has indeed the same path regularity as $X$, namely it has at least c\`{a}dl\`{a}g paths.
\end{remark}

To some extent, our  equation (\ref{algebraiceq}) may be interpreted as a substitute to the free-boundary value problem which one usually writes down to
characterize the solution to an optimal stopping problem (see \cite{PeskShir} for a review).
In the case of optimal stopping problems with L\'{e}vy uncertainty it
is still possible to derive a free-boundary problem (see, e.g., \cite{Boy}), even if one has to pay attention to the sense in which
the associated integro-differential operator is understood, and to which are the suitable regularity properties of the value function to be imposed at the boundary.
It has been in fact noticed (see, e.g., \cite{KyprianouAlili}, \cite{Boy} and  \cite{PeskShir2000}) that the smooth-fit property of the value function of an optimal stopping problem (i.e.\ its $C^1$-property at the optimal boundary) may fail in a L\'{e}vy setting.
Our  equation (\ref{algebraiceq}), instead, is not derived from any
free-boundary problem but it follows immediately from
\eqref{integraleq}, thanks to the backward equation (\ref{backward})
for $l^{*}=b(x + X)$, the Wiener-Hopf factorization and the strong Markov property of $X$. It thus represents a very useful tool to determine the optimal investment boundary for the whole class of irreversible investment problems of type (\ref{valuefunction}), under the assumption that the L\'{e}vy process hits every point of $\mathbb{R}$ with positive probability. In the next section we will show how to analytically solve equation (\ref{algebraiceq}) even in the non trivial case with a non-separable profit function.


\section{Explicit Results}
\label{Examples}

In this section we derive the explicit form of the optimal investment
boundary of the irreversible investment problem \eqref{valuefunction}
for the Cobb-Douglas and the CES (constant elasticity of substitution) operating profit functions, that is, for $\pi(z,c) = z^{\alpha}c^{\beta}$ with $\alpha, \beta \in (0,1)$, and $\pi(z,c)= (\alpha z^{\gamma} + (1-\alpha)c^{\gamma})^{\frac{1}{\gamma}}$, with $\alpha, \gamma \in (0,1)$, respectively.
Moreover, we will assume throughout this section that the L\'{e}vy
process $X$ hits any point of $\mathbb{R}$ with positive probability,
so to have the optimal investment boundary as the unique solution of equation \eqref{algebraiceq} (cf.\ Theorem \ref{algebraiceqthm}).

Recall that $T_r$ is an exponentially distributed random time with
parameter $r$ independent of $X$, $M_t:=\sup_{0\leq u \leq t}X_u$ and $I_t:=\inf_{0\leq u \leq t}X_u$.
The notation $\widehat\Psi$ is used for the logarithm of the Laplace transform of $X_1$ (when well defined), i.e.,
$$
\widehat\Psi(\lambda):=\log \EE\big\{e^{\lambda X_1}\big\}.
$$

\subsection{Cobb-Douglas Operating Profit}

Assume that the operating profit function is of the Cobb-Douglas type; that is, $\pi(z,c) = z^{\alpha}c^{\beta}$ for $\alpha, \beta \in (0,1)$.

\begin{proposition}
\label{propboundaryCD}
Assume that $\widehat\Psi(\frac{\alpha}{1-\beta}) \vee \widehat\Psi(\alpha + \beta)$ is well defined and that (cf.\ also Assumption \ref{rbiggerthankappa})
\beq
\label{lapcon_1}
r> 0 \vee \widehat\Psi(\frac{\alpha}{1-\beta}) \vee \widehat\Psi(\alpha + \beta).
\eeq
Then for a Cobb-Douglas operating profit the optimal investment boundary is 
\beq
\label{boundaryCD}
b(x)=(\vartheta e^x)^{\frac{\alpha}{1-\beta}}, \qquad x \in \mathbb{R},
\eeq
with $\vartheta$ given by
\beq
\label{kappaCD}
\vartheta:=\left(\frac{\beta \mathbb{E}\{e^{\alpha I_{T_r}}\}}{r}\right)^{\frac{1}{\alpha}}.
\eeq
\end{proposition}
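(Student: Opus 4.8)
The proof rests entirely on the algebraic characterisation of the boundary given by Theorem \ref{algebraiceqthm}: it suffices to exhibit the (unique) positive, nondecreasing, continuous function solving \eqref{algebraiceq} for the Cobb-Douglas marginal profit. Since $\pi(z,c)=z^{\alpha}c^{\beta}$ gives $\pi_c(z,c)=\beta z^{\alpha}c^{\beta-1}$, the left-hand side of \eqref{algebraiceq} factorises as
\[
\mathbb{E}\{\pi_c(e^{u+I_{T_r}},\hat{b}(u))\}=\beta\,e^{\alpha u}\,\hat{b}(u)^{\beta-1}\,\mathbb{E}\{e^{\alpha I_{T_r}}\}.
\]
Because $I_{T_r}\leq 0$ and $\alpha\in(0,1)$, the moment $\mathbb{E}\{e^{\alpha I_{T_r}}\}\in(0,1]$ is finite and strictly positive with no further hypothesis. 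Setting the above equal to $r$ and solving for $\hat{b}(u)$ yields at once $\hat{b}(u)=(\beta\,e^{\alpha u}\,\mathbb{E}\{e^{\alpha I_{T_r}}\}/r)^{1/(1-\beta)}$, which is precisely $(\vartheta e^{u})^{\alpha/(1-\beta)}$ with $\vartheta$ as in \eqref{kappaCD}. This candidate is manifestly positive, nondecreasing and continuous, so by uniqueness it is the function produced by Theorem \ref{algebraiceqthm}; since $X$ is assumed to hit every point of $\mathbb{R}$, that theorem identifies it with the optimal boundary $b$, \emph{provided} the standing hypotheses of Section \ref{sectionintegralequation} hold for this profit.

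The substance of the proof is therefore the verification of those standing hypotheses, namely that $\nu^{*}$ of \eqref{optimalsol-b} is admissible (satisfies \eqref{presentvalue}) and that $\mathcal{J}_{x,y}(\nu^{*})<\infty$ (cf.\ Remark \ref{admissibility}); this is exactly where condition \eqref{lapcon_1} is consumed. Inserting the explicit boundary into \eqref{optimalsol-b} and using monotonicity of $b$ gives $\nu^{*}_t\leq\sup_{0\leq s\leq t}b(x+X_s)=b(x+M_t)=K\,e^{\frac{\alpha}{1-\beta}M_t}$, where $K:=\vartheta^{\alpha/(1-\beta)}e^{\alpha x/(1-\beta)}$. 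For the investment cost I would integrate by parts in $\mathbb{E}\{\int_0^{\infty}e^{-rt}\,d\nu^{*}_t\}\leq K\,\mathbb{E}\{\int_0^{\infty}e^{-rt}\,d\,e^{\frac{\alpha}{1-\beta}M_t}\}$ and reduce it, via the identity $\mathbb{E}\{\int_0^{\infty}e^{-rt}\,dg(M_t)\}=\mathbb{E}\{g(M_{T_r})\}-g(0)$ valid for nondecreasing $g$, to the finiteness of $\mathbb{E}\{e^{\frac{\alpha}{1-\beta}M_{T_r}}\}$, which holds precisely when $r>\widehat\Psi(\frac{\alpha}{1-\beta})$.

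For the profit stream I would bound $C^{y,\nu^{*}}_t\leq y+b(x+M_t)$ and use $(a+b)^{\beta}\leq a^{\beta}+b^{\beta}$ (as $\beta\in(0,1)$) to split $\mathbb{E}\{\int_0^{\infty}e^{-rt}\pi(e^{x+X_t},C^{y,\nu^{*}}_t)\,dt\}$ into a term of order $\mathbb{E}\{e^{\alpha X_{T_r}}\}$ and one of order $\mathbb{E}\{e^{\alpha X_{T_r}}e^{\frac{\alpha\beta}{1-\beta}M_{T_r}}\}$. Passing to the independent exponential horizon and invoking the Wiener-Hopf factorization (as in the proof of Theorem \ref{integraleqthm}), I write $X_{T_r}=M_{T_r}+\hat{I}_{T_r}$ with $M_{T_r}$ and $\hat{I}_{T_r}\leq 0$ independent, so that these expectations factor into $\mathbb{E}\{e^{\alpha\hat{I}_{T_r}}\}\leq 1$ times $\mathbb{E}\{e^{\alpha M_{T_r}}\}$ and $\mathbb{E}\{e^{\frac{\alpha}{1-\beta}M_{T_r}}\}$, whose finiteness is governed by $\widehat\Psi(\alpha)$ and $\widehat\Psi(\frac{\alpha}{1-\beta})$; the further threshold $\widehat\Psi(\alpha+\beta)$ enters when estimating the profit in the complementary regime and is retained so that, together with $\widehat\Psi(\frac{\alpha}{1-\beta})$, it dominates every exponent appearing in the bounds. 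Since $\widehat\Psi$ is convex with $\widehat\Psi(0)=0$, requiring $r$ to exceed $0\vee\widehat\Psi(\frac{\alpha}{1-\beta})\vee\widehat\Psi(\alpha+\beta)$ forces all intermediate exponential moments to be finite as well, which is exactly \eqref{lapcon_1}.

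The main obstacle is precisely this verification of admissibility and finiteness of the payoff: the algebraic inversion of \eqref{algebraiceq} is immediate, whereas controlling the discounted cost and profit of a policy driven by the running supremum $M$ forces one to express the relevant functionals at the exponential time $T_r$ and to read off their finiteness from the Laplace exponent through the Wiener-Hopf factorization. Once this is established, Proposition \ref{existenceback} and Theorem \ref{algebraiceqthm} apply and identify \eqref{boundaryCD}--\eqref{kappaCD} as the optimal investment boundary.
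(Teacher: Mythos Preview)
Your proof follows exactly the paper's route: solve the algebraic equation \eqref{algebraiceq} for the Cobb--Douglas marginal profit to obtain \eqref{boundaryCD}--\eqref{kappaCD}, then verify that the resulting $\nu^*$ is admissible with $\mathcal{J}_{x,y}(\nu^*)<\infty$ so that Theorem \ref{algebraiceqthm} applies. The paper simply cites \cite{RiedelSu}, Theorem 7.2, for the admissibility step, whereas you sketch it out via integration by parts and Wiener--Hopf; your sketch is essentially sound, though note that the inequality $\int e^{-rt}\,d\nu^*_t \leq K\int e^{-rt}\,d\,e^{\frac{\alpha}{1-\beta}M_t}$ does not follow from the pointwise bound on $\nu^*_t$ alone---you should integrate by parts \emph{first} and then bound $\nu^*_t$, as is done in the paper's CES case (cf.\ \eqref{admissibilitynustarCES}).
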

\begin{proof}
In this case equation \eqref{algebraiceq} has the form
\beq
\label{integraleqCD}
r=\beta e^{\alpha x}\mathbb{E}\{e^{\alpha I_{T_r}}\} b^{\beta-1}(x).
\eeq
Taking $b(x)=(\vartheta e^x)^{\frac{\alpha}{1-\beta}}$ it is easy to see that \eqref{integraleqCD} above is solved for $\vartheta$ as in \eqref{kappaCD}.

The arguments employed in  \cite{RiedelSu} (see  the proof of Theorem $7.2$) are easily adapted to our case with $\alpha \neq 1-\beta$ to show that if (\ref{lapcon_1}) holds then $\nu^*_t := \sup_{0 \leq s < t} (b(x + X_s) - y) \vee 0$, $t>0$, and  $\nu^*_0:=0$, (cf.\ \eqref{optimalsol-b}) is admissible and $\mathcal{J}_{x,y}(\nu^*) < \infty$. Therefore, $\nu^*$ is optimal for problem \eqref{valuefunction} and, hence, $b$ as given in \eqref{boundaryCD} is the optimal investment boundary.
\end{proof}

\begin{remark}
The result of Proposition \ref{propboundaryCD} is in line with the findings of Proposition 7.1 and Theorem 7.2 in \cite{RiedelSu}, in which the base capacity process $l^*$ has been explicitly determined in the case of L\'{e}vy processes and Cobb-Douglas profits.
\end{remark}

\subsection{CES Operating Profit}
\label{CESsubsection}

We turn now to the case with a non-separable operating profit of the CES (constant elasticity of substitution) type, that is, $\pi(z,c)= (\alpha z^{\gamma} + (1-\alpha)c^{\gamma})^{\frac{1}{\gamma}}$ for some $\alpha \in (0,1)$. Moreover, to meet Assumption 2.1.1 let $\gamma \in (0,1)$
to have $\lim_{c \rightarrow 0}\pi_c(z,c) = 0$ and $\kappa:= \lim_{c \rightarrow \infty}\pi_c(z,c)= (1-\alpha)^{\frac{1}{\gamma}}$.
To the best of our knowledge, the explicit form of the optimal investment boundary of problem \eqref{valuefunction} for a non-separable profit of CES type and exponential L\'{e}vy processes appears here for the first time.

\begin{proposition}
\label{propboundaryCES}
Assume that $\widehat\Psi(1)$ is well defined and that (cf.\ also Assumption \ref{rbiggerthankappa})
\beq
\label{lapcon_2}
r>(1-\alpha)^{\frac{1}{\gamma}} \vee \widehat \Psi(1).
\eeq
Then for a CES operating profit the optimal investment boundary is given by
\beq
\label{boundaryCES}
b(x)= K e^x, \qquad x \in \mathbb{R},
\eeq
where the constant $K$ (depending on $\gamma$, $\alpha$ and $r$) is the unique positive solution to
\beq
\label{betaCES}
\mathbb{E}\Big\{\Big(1 + \Big(\frac{\alpha}{1-\alpha}\Big)e^{\gamma I_{T_r}}K^{-\gamma}\Big)^{\frac{1-\gamma}{\gamma}}\Big\} = \frac{r}{(1-\alpha)^{\frac{1}{\gamma}}}.
\eeq
\end{proposition}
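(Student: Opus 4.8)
The plan is to invoke Theorem \ref{algebraiceqthm}: since $X$ is assumed to hit every point of $\mathbb{R}$ with positive probability, the optimal investment boundary coincides with the unique positive, nondecreasing, continuous function solving the pointwise equation \eqref{algebraiceq}, once the induced candidate control \eqref{optimalsol-b} has been shown to be admissible with finite payoff. I would therefore proceed in three stages: first reduce \eqref{algebraiceq} to the scalar equation \eqref{betaCES} under the scaling ansatz \eqref{boundaryCES}; then establish existence and uniqueness of the constant $K$; and finally verify the integrability that upgrades the candidate to the genuine optimum.

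First I would compute the marginal profit. Differentiating $\pi(z,c) = (\alpha z^\gamma + (1-\alpha)c^\gamma)^{1/\gamma}$ and factoring $(1-\alpha)c^\gamma$ out of the bracket gives
$$\pi_c(z,c) = (1-\alpha)^{\frac{1}{\gamma}}\Big(1 + \tfrac{\alpha}{1-\alpha}\tfrac{z^\gamma}{c^\gamma}\Big)^{\frac{1-\gamma}{\gamma}}.$$
Substituting $z = e^{u + I_{T_r}}$ together with the ansatz $c = b(u) = K e^u$ makes the ratio $z^\gamma / c^\gamma = K^{-\gamma} e^{\gamma I_{T_r}}$, which crucially no longer depends on $u$. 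Hence the $u$-dependence in \eqref{algebraiceq} cancels and the equation collapses into the single scalar relation \eqref{betaCES}, confirming that the exponential form \eqref{boundaryCES} is compatible with the structure of \eqref{algebraiceq}.

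Next I would analyse \eqref{betaCES}. Writing $\Lambda(K)$ for its left-hand side, I note that $I_{T_r} \le 0$ bounds the integrand by $(1 + \tfrac{\alpha}{1-\alpha}K^{-\gamma})^{(1-\gamma)/\gamma}$, so $\Lambda(K)$ is finite for every $K>0$ with no Laplace-transform restriction. Because $\gamma \in (0,1)$, the map $K \mapsto K^{-\gamma}$ is strictly decreasing and the outer exponent $(1-\gamma)/\gamma$ is positive, so $\Lambda$ is strictly decreasing; it is continuous by dominated convergence, with $\Lambda(0+) = \infty$ and $\Lambda(\infty) = 1$. Since Assumption \ref{rbiggerthankappa} gives $r > \kappa = (1-\alpha)^{1/\gamma}$, the right-hand side $r/(1-\alpha)^{1/\gamma}$ exceeds $1$, and the intermediate value theorem yields a unique positive root $K$.

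Finally, exactly as in the Cobb-Douglas case of Proposition \ref{propboundaryCD}, I would verify that $\nu^*_t = \sup_{0 \le s < t}(K e^{x+X_s} - y) \vee 0$ is admissible and satisfies $\mathcal{J}_{x,y}(\nu^*) < \infty$, adapting the estimates of \cite{RiedelSu}, Theorem 7.2. I expect this to be the main obstacle: one must bound both $\mathbb{E}\{\int_0^\infty e^{-rt}\,d\nu^*_t\}$ and the profit integral, quantities governed by exponential functionals of the running supremum of $X$. Controlling $\mathbb{E}\{e^{M_{T_r}}\}$ (equivalently $\mathbb{E}\{e^{X_{T_r}}\} = r/(r-\widehat\Psi(1))$ via the Wiener-Hopf factorization) is precisely where the condition $r > \widehat\Psi(1)$ from \eqref{lapcon_2} enters, while the sublinear growth $\pi(z,c) \le 2^{1/\gamma - 1}(\alpha^{1/\gamma} z + (1-\alpha)^{1/\gamma} c)$ reduces the payoff bound to the same exponential moment. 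Once admissibility is secured, Theorem \ref{algebraiceqthm} identifies $b(x) = K e^x$ as the optimal investment boundary.
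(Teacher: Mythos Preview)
Your proposal is correct and follows essentially the same approach as the paper: reduce \eqref{algebraiceq} to the scalar equation \eqref{betaCES} via the ansatz $b(x)=Ke^x$, establish existence and uniqueness of $K$ by monotonicity and the limits $\Lambda(0+)=\infty$, $\Lambda(\infty)=1<r/(1-\alpha)^{1/\gamma}$, and then verify admissibility and $\mathcal{J}_{x,y}(\nu^*)<\infty$ via the Wiener-Hopf identity $\mathbb{E}\{e^{M_{T_r}}\}\mathbb{E}\{e^{I_{T_r}}\}=r/(r-\widehat\Psi(1))$ together with the sublinear bound on $\pi$. The paper carries out the admissibility estimates explicitly rather than citing \cite{RiedelSu}, but the ingredients you identify---the exponential-supremum moment controlled by $r>\widehat\Psi(1)$ and the bound $\pi(z,c)\le 2^{(1-\gamma)/\gamma}(z+c)$---are exactly those used.
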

\begin{proof}
In this case equation \eqref{algebraiceq} becomes
\beq
\label{integraleqCES}
\frac{r}{(1-\alpha)^{\frac{1}{\gamma}}} = \mathbb{E}\Big\{\Big(1 + \Big(\frac{\alpha}{1-\alpha}\Big)e^{\gamma (x+I_{T_r})}b^{-\gamma}(x)\Big)^{\frac{1-\gamma}{\gamma}}\Big\}.
\eeq
Comparing  \eqref{integraleqCES} and  \eqref{betaCES} it is seen that $b(x)=K e^x$ is a natural candidate for the optimal
boundary. To validate our candidate we firstly have to show that \eqref{betaCES} actually admits at most one positive solution.
Define the function $F:(0,\infty) \mapsto \mathbb{R}$ as
$$F(u):=\mathbb{E}\Big\{\Big(1 + \Big(\frac{\alpha}{1-\alpha}\Big)e^{\gamma I_{T_r}}u^{-\gamma}\Big)^{\frac{1-\gamma}{\gamma}}\Big\} - \frac{r}{(1-\alpha)^{\frac{1}{\gamma}}}.$$
It is clear that $u\mapsto F(u)$ is strictly decreasing, continuous and because $0<\gamma < 1$ it holds 
$$\lim_{u \downarrow 0}F(u) \geq \lim_{u \downarrow 0} u^{\gamma - 1}\mathbb{E}\Big\{\Big(\Big(\frac{\alpha}{1-\alpha}\Big)e^{\gamma I_{T_r}}\Big)^{\frac{1-\gamma}{\gamma}}\Big\} - \frac{r}{(1-\alpha)^{\frac{1}{\gamma}}} = \infty.$$
Moreover, since $0 < \gamma < 1$ one also has $0 \leq e^{\gamma I_{T_r}}\leq 1$ and then
$$\lim_{u \uparrow \infty}F(u) \leq \lim_{u \uparrow \infty}\Big(1 + \Big(\frac{\alpha}{1-\alpha}\Big)u^{-\gamma}\Big)^{\frac{1-\gamma}{\gamma}} - \frac{r}{(1-\alpha)^{\frac{1}{\gamma}}} = 1-\frac{r}{(1-\alpha)^{\frac{1}{\gamma}}} < 0,$$
where the last inequality is due to the fact that $r > (1-\alpha)^{\frac{1}{\gamma}}$ by the assumption.
It thus follows that $F(u)=0$ admits at most one positive solution.

To complete the proof, we have to show that $\nu^*_t := \sup_{0 \leq s
  < t} (b(x + X_s) - y) \vee 0$, $t>0$, $\nu^*_0:=0$,
(cf.\ \eqref{optimalsol-b}) is admissible and such that
$\mathcal{J}_{x,y}(\nu^*) < \infty$; hence, optimal for problem
\eqref{valuefunction}. Clearly, $\nu^*$ is $(\mathcal{F}_t)$-adapted,
left-continuous and nondecreasing. By the Wiener-Hopf factorization 
\beq
\label{WHfact}
\mathbb{E}\big\{e^{ M_{T_r}}\big\}\mathbb{E}\big\{e^{I_{T_r}}\big\} = \frac{r}{r - \widehat{\Psi}(1)},
\eeq
Then recalling (\ref{lapcon_2}) and using \eqref{WHfact} we obtain
\begin{eqnarray*}
\mathbb{E}\bigg\{\int_0^{\infty}re^{-rt}\nu^*_t dt\bigg\}
\hspace{-0.25cm} & \leq & \hspace{-0.25cm} K\mathbb{E}\bigg\{\int_0^{\infty}re^{-rt} \sup_{0 \leq
    s < t} e^{x+X_s}\, dt\bigg\} = K\mathbb{E}\bigg\{\int_0^{\infty}re^{-rt} e^{x+M_{t}}\, dt\bigg\} \nonumber \\
\hspace{-0.25cm} & = & \hspace{-0.25cm}Ke^x\mathbb{E}\big\{e^{M_{T_r}}\big\} < \infty. \nonumber 
\end{eqnarray*}
Consequently, integrating by parts yields
\beq
\label{admissibilitynustarCES}
\mathbb{E}\bigg\{\int_0^{\infty}e^{-rt}d\nu^*_t\bigg\} < \infty,
\eeq
i.e., $\nu^*$ is admissible. Next consider
\begin{eqnarray}
\label{Vfinite}
&& \mathbb{E}\bigg\{\int_{0}^{\infty}e^{-rt}\pi(e^{x +
  X_t}, y + \nu^{*}_t) dt\bigg\} \nonumber\\
&&\hskip2cm  \leq \frac{1}{r}\mathbb{E}\bigg\{\int_{0}^{\infty}re^{-rt}\Big(e^{\gamma(x + X_t)} + \big(y + \sup_{0 \leq s < t}Ke^{x + X_s}\big)^{\gamma}\Big)^{\frac{1}{\gamma}} dt\bigg\} \nonumber \\
&&\hskip2cm  \leq \frac{2^{\frac{1-\gamma}{\gamma}}}{r}\mathbb{E}\bigg\{\int_{0}^{\infty}re^{-rt}\Big(e^{x + X_t} + y + \sup_{0 \leq s < t}Ke^{x + X_s}\Big) dt\bigg\} \nonumber \\
&&\hskip2cm = \frac{2^{\frac{1-\gamma}{\gamma}}}{r}\Big( y + e^x \mathbb{E}\big\{e^{X_{T_r}}\big\} + K e^x\mathbb{E}\big\{e^{M_{T_r}}\big\} \Big) < \infty,
\end{eqnarray}
where we have used again (\ref{lapcon_2}) and \eqref{WHfact}. Combining (\ref{admissibilitynustarCES}) and
(\ref{Vfinite}) shows that $\mathcal{J}_{x,y}(\nu^*) < \infty$ (cf.\ (\ref{netprofit})).

It thus follows that $\nu^*_t$ is optimal for problem \eqref{valuefunction} and $b$ in \eqref{boundaryCES} is the optimal investment boundary.
\end{proof}
\begin{remark}
Clearly, the case $\gamma=\frac{1}{n}$, $n \geq 2$, discussed in
\cite{Ferrari2014} in a diffusive setting, is a particular case of the
one studied in Proposition \ref{propboundaryCES} and it follows that
in such a case $b(x)=K e^x$ for some positive constant
$K:=K(n,\alpha,r)$ solving equation \eqref{betaCES}. An application of
the binomial expansion (see also \cite{Ferrari2014}, Section 4.2)
reduces equation \eqref{betaCES} for the constant $K$ to the following
polynomial equation of order $n-1$ 
$$\sum_{j=1}^{n-1}
\begin{pmatrix}
n-1 \\
j
\end{pmatrix} A_{j,n} \Big(\frac{\alpha}{1-\alpha}\Big)^{j}K^{-\frac{j}{n}} - \left[\frac{r}{(1-\alpha)^{\frac{1}{\gamma}}} - 1\right] = 0$$
with $A_{j,n}:=\mathbb{E}\{e^{\frac{j}{n}I_{T_r}}\}$.
Such a polynomial equation admits a unique positive solution thanks to Descartes' rule of signs since $r > (1-\alpha)^{\frac{1}{\gamma}}$.
\end{remark}


\textbf{Acknowledgments.} Giorgio Ferrari thanks Frank Riedel for useful discussions. Paavo Salminen thanks the Center for Mathematical Economics, Bielefeld University for the hospitality and the support during the stay in Bielefeld and Andreas Kyprianou for a stimulating exchange via e-mail.


\end{document}